\pgfplotsset{compat=1.18}
\newtheorem{theorem}{Theorem}[section]
\newtheorem{proposition}[theorem]{Proposition}
\newtheorem{lemma}[theorem]{Lemma}
\newtheorem{corollary}[theorem]{Corollary}
\newtheorem{claim}{Claim}
\newtheorem{reductionrule}{Reduction Rule}[section]
\newtheorem{decisionrule}{Decision Rule}[section]
\newtheorem{problemenv}{Problem}[section]
\newcommand{\cfc}{\textsc{Conflict-Free Cut}}
\newcommand{\symnontrivialsat}{\textsc{Symmetric Non-Constant SAT}}
\newcommand{\calO}{\mathcal{O}}
\renewcommand{\SAT}{\textsc{3-SAT}}
\newcommand{\cleanSAT}{\textsc{Clean 3-SAT}}
\newcommand{\calF}{\mathcal{F}}
\newcommand{\matchingcut}{\textsc{Matching Cut}}
\begin{document}
\onehalfspace
\title{On Conflict-Free Cuts: Algorithms and Complexity}
\author{Johannes Rauch$^1$ \and Dieter Rautenbach$^1$ \and U\'{e}verton S. Souza$^2$}
\date{}
\maketitle
\vspace{-10mm}
\begin{center}
	{\small 
		$^1$ Institute of Optimization and Operations Research, Ulm University, Ulm, Germany\\
		$^2$ Instituto de Computa\c{c}\~{a}o, Universidade Federal Fluminense, Niter\'{o}i, Brazil\\
		\texttt{$\{$johannes.rauch,dieter.rautenbach$\}$@uni-ulm.de},
		\texttt{ueverton@ic.uff.br}
	}
\end{center}

\begin{abstract}
One way to define the \matchingcut{} problem is: Given a graph $G$, is there an edge-cut $M$ of $G$ such that $M$ is an independent set in the line graph of $G$? 
We propose the more general \cfc{} problem:
Together with the graph $G$, we are given a so-called conflict graph $\hat{G}$ on the edges of $G$, and we ask for an edge-cutset $M$ of $G$ that is independent in $\hat{G}$.
Since conflict-free settings are popular generalizations of classical optimization problems and \cfc{} was not considered in the literature so far, we start the study of the problem.
We show that the problem is \NP-complete even when the maximum degree of $G$ is 5 and $\hat{G}$ is 1-regular.
The same reduction implies an exponential lower bound on the solvability based on the Exponential Time Hypothesis.
We also give parameterized complexity results: We show that the problem is fixed-parameter tractable with the vertex cover number of $G$ as a parameter, and we show $\W[1]$-hardness even when $G$ has a feedback vertex set of size one, and the clique cover number of $\hat{G}$ is the parameter.
Since the clique cover number of $\hat{G}$ is an upper bound on the independence number of $\hat{G}$ and thus the solution size, this implies \W[1]-hardness when parameterized by the cut size.
We list polynomial-time solvable cases and interesting open problems.
At last, we draw a connection to a symmetric variant of \textsc{SAT}.
\end{abstract}

\section{Introduction}
The well-studied \matchingcut{} problem \cite{bonsma2012extremal,chen2021matching,chvatal1984recognizing,gomes2021finding,kratsch2016algorithms} asks whether a given graph $G$ admits a matching cut $M$, that is, a set of edges $M \subseteq E(G)$ such that $G-M$ is disconnected and $M$ is an independent set in the line graph of $G$.
Here, we generalize this problem to the \cfc{} problem.
Instead of the cut $M$ being an independent set in the line graph of $G$, we require that $M$ is an independent set in a so-called \emph{conflict graph $\hat{G}$} on the edges of $G$, which is given together with $G$ as an input.
Since \matchingcut{} was proven to be \NP-complete by Chv\'{a}tal in 1984~\cite{chvatal1984recognizing}, it is clear that \cfc{} is \NP-complete in the general case, too.
The conflict-free setting is a typical extension: Several classical optimization problems have been studied in a ``conflict-free'' version; among them are \textsc{Bin Packing}~\cite{capua2018study, gendreau2004heuristics}, \textsc{Knapsack}~\cite{pferschy2009knapsack,bettinelli2017branch}, \textsc{Maximum Matching} and \textsc{Shortest Path}~\cite{darmann2011paths}, \textsc{Hypergraph Matching}~\cite{glock2023conflict}, \textsc{Minimum Spanning Tree}~\cite{darmann2011paths,zhang2011minimum,barros2022conflict}, and \textsc{Set Cover}~\cite{jacob2021parameterized}.
However, as far as we know, \cfc{} is a new and unexplored problem.

All graphs in this article are finite, simple, and undirected, unless stated otherwise. We use standard terminology.
An \emph{(edge-)cut} of a graph $G$ is a set of edges $F$ such that $G-F$ is disconnected.
In this article, a cut will always be an edge-cut.
If $s$ and $t$ are two vertices of $G$, and $s$ and $t$ lie in different components of $G-F$, then $F$ is an \emph{$s$-$t$ cut}.
Formally, \cfc{} is the following problem:
\begin{framed}
\noindent
\cfc{} \\
Instance: A connected graph $G$, and a conflict graph $\hat{G}$ on $E(G)$, that is, $V(\hat{G}) = E(G)$. \\
Question: Does $G$ have a \emph{conflict-free cut}? That is, is there a subset $F \subseteq E(G)$ such that $G-F$ is disconnected and $F$ is independent in $\hat{G}$?
\end{framed}

Although \cfc{} is \NP-complete when the conflict graph is a line graph, as it generalizes \textsc{Matching Cut}, it is natural to ask about the complexity of \cfc{} when the conflict graph has a simpler structure. 
In particular, one of the simplest structures we can consider is the conflict graph being 1-regular. 
The complexity of conflict-free optimization problems when the conflict graph is 1-regular is intriguing. While \textsc{Maximum Matching} becomes NP-hard, \textsc{Minimum Spanning Tree} remains polynomial-time solvable under this constraint setting (see~\cite{darmann2011paths}). Therefore, it is interesting to investigate the complexity of finding a conflict-free cut under the same constraint setting.

The paper is structured as follows.
In Section~\ref{sec_np}, we prove that \cfc{} stays \NP{}-complete, even when $G$ has maximum degree at most five and the conflict graph $\hat{G}$ is 1-regular.
We show this by a linear reduction from a \SAT{} variant, which also implies a lower bound based on the Exponential Time Hypothesis (ETH) by Impagliazzo and Paturo~\cite{impagliazzo2001complexity}.
In Section~\ref{sec_param}, we give some parameterized complexity results. 
In particular, we show that \cfc{} is fixed-parameter tractable, taking the vertex cover number of $G$ as a parameter. 
In addition, we show that determining whether an instance $(G,\hat{G})$ admits a conflict-free cut is \W[1]-hard, even when $G$ is a series-parallel graph having a feedback vertex set of size one, and the clique cover number of $\hat{G}$ is the parameter. 
Note that the clique cover number of $\hat{G}$ upper bounds the independence number of $\hat{G}$, and thus the size of any conflict-free cut of $(G,\hat{G})$ too, implying \W[1]-hardness taking the cut size as a parameter as well.
Besides, we remark on how Courcelle's Theorem~\cite{courcelle1990graph} applies to \cfc{}. 
Section~\ref{sec_poly_open} lists some simple polynomial-time solvable cases and states some open problems regarding \cfc{}.
At last, in Section~\ref{sec_sat}, we introduce a variant of \textsc{SAT} that is closely related to \cfc{}, and prove that the problem reduces to solving $n$ many \textsc{4-SAT} instances, where $n$ is the order of the input graph $G$.

\subsection*{Preliminaries} \label{sec_pre}

We introduce some further terminology.
We exclusively refer to $G$ as the \emph{input graph} and to $\hat{G}$ as the \emph{conflict graph}.
When we say that two edges of a graph are \emph{conflicting}, this means that they are adjacent in the corresponding conflict graph.
At times, for the sake of simplicity, we refer to the conflict graph only implicitly by stating that a graph $G$ has edge conflicts, or writing that $G$ has a conflict-free cut.
For two disjoint sets of vertices $A$ and $B$ of a graph $G$, let $E(A,B)$ denote the set of edges of $G$ with one end in $A$ and the other in $B$. 
Besides, let $\partial_GA = E(A, V(G) \setminus A)$ denote the set of edges of $G$ with exactly one end in $A$.
For a vertex $v$ of $G$, we write 
$\partial_Gv$ instead of $\partial_G\{v\}$.
We will use the $\calO^*$-notation to suppress polynomial factors in the $\calO$-notation.


\section{An \NP-Completeness Proof} \label{sec_np}
We prove here that \cfc{} remains \NP-complete even when the input graph has maximum degree at most five and the conflict graph is $1$-regular. We begin by stating a result from the literature, on which our proof will rely.

Following~\cite{cygan2017hitting}, we say that a 3-CNF formula $\calF$ is \emph{clean} if each variable of $\calF$ appears exactly three times, at least once positively and at least once negatively, and each clause of $\calF$ contains two or three literals and does not contain one variable twice.
By way of renaming, we may additionally assume that each variable occurs twice positively.
We call the problem of deciding satisfiability of a clean 3-CNF formula \cleanSAT{}.
The reduction in the proof of Lemma~6 in \cite{cygan2017hitting} shows that \cleanSAT{} is \NP-complete.

Moreover, we need two types of gadgets.
Each gadget is a graph with edge conflicts. 
The first type is the \emph{uncutable gadget}. 
It is the square of a cycle $C_{2n}$ with even order, $n \geq 4$, and two additional edges.
The two additional edges have one end in each partite set when viewing $C_{2n}$ as a bipartite graph, respectively.
The edge conflicts of the uncutable gadget are as indicated in Figure~\ref{fig_gadgets}.
As the name indicates, these graphs do not admit a conflict-free cut.
Without the two additional edges, it is easy to see that the only conflict-free cut is separating the partite sets of the underlying $C_{2n}$, again viewing it as a bipartite graph.
With the two additional edges, this cut contains two conflicting edges, too.
Note that all uncutable gadgets have two vertices of degree five, while the rest of the vertices have degree four.
Note also that they are non-planar.
The second type is the \emph{variable gadget}. 
It is also depicted in Figure~\ref{fig_gadgets}. 
Note that the corresponding conflict graph has maximum degree one for both gadget types.

With this we are ready to prove Theorem~\ref{thm_np}.

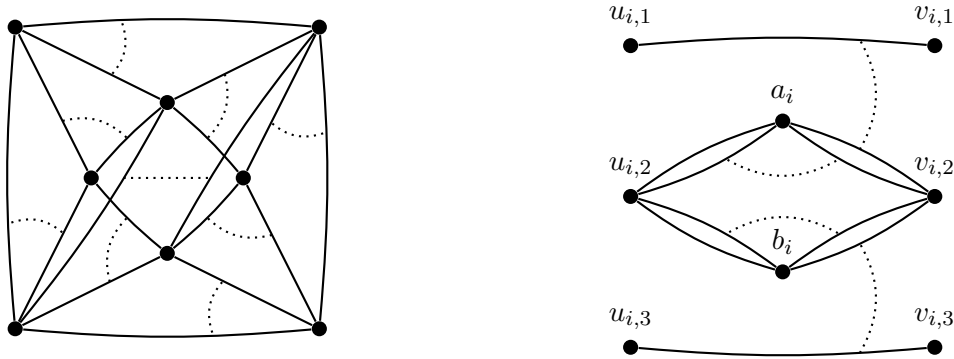
\begin{figure}
\begin{minipage}{0.5\textwidth}
\centering
\begin{tikzpicture}

\node [circle, fill,inner sep=2] (v6) at (-1,0) {};
\node [circle, fill,inner sep=2] (v8) at (1,0) {};
\node [circle, fill,inner sep=2] (v7) at (0,1) {};
\node [circle, fill,inner sep=2] (v5) at (0,-1) {};
\node [circle, fill,inner sep=2] (v1) at (-2,-2) {};
\node [circle, fill,inner sep=2] (v4) at (2,-2) {};
\node [circle, fill,inner sep=2] (v3) at (2,2) {};
\node [circle, fill,inner sep=2] (v2) at (-2,2) {};

\draw [thick](v5) -- (v1) -- (v6) -- (v2) -- (v7) -- (v3) -- (v8) -- (v4) -- (v5);

\node [inner sep=0] (v10) at (-1.35,-0.75) {};
\node [inner sep=0](v9) at (-2.075,-0.6) {};
\node [inner sep=0](v12) at (-0.75,1.35) {};
\node [inner sep=0](v11) at (-0.6,2.075) {};
\node [inner sep=0](v14) at (1.35,0.75) {};
\node [inner sep=0](v13) at (2.075,0.6) {};
\node [inner sep=0](v16) at (0.75,-1.35) {};
\node [inner sep=0](v15) at (0.6,-2.075) {};
\draw [thick, dotted, bend left] (v9) edge (v10);
\draw [thick, dotted, bend left] (v11) edge (v12);
\draw [thick, dotted, bend left] (v13) edge (v14);
\draw [thick, dotted, bend left] (v15) edge (v16);

\node [inner sep=0](v18) at (0.5,-0.5) {};
\node [inner sep=0](v17) at (1.4,-0.75) {};
\node [inner sep=0](v24) at (-0.5,-0.5) {};
\node [inner sep=0](v22) at (-0.5,0.5) {};
\node [inner sep=0](v20) at (0.5,0.5) {};
\node [inner sep=0](v23) at (-0.75,-1.4) {};
\node [inner sep=0](v21) at (-1.4,0.75) {};
\node [inner sep=0](v19) at (0.75,1.4) {};
\draw [thick, dotted, bend left] (v17) edge (v18);
\draw [thick, dotted, bend left] (v19) edge (v20);
\draw [thick, dotted, bend left] (v21) edge (v22);
\draw [thick, dotted, bend left] (v23) edge (v24);
\draw [thick, bend left=5] (v1) edge (v2);
\draw [thick, bend left=5] (v2) edge (v3);
\draw [thick, bend left=5] (v3) edge (v4);
\draw [thick, bend left=5] (v4) edge (v1);

\draw [thick, bend left=5] (v7) edge (v1);
\draw [thick, bend right=5] (v3) edge (v5);
\node [inner sep=0] (v25) at (-0.6,0) {};
\node [inner sep=0] (v26) at (0.6,0) {};
\draw [thick, dotted] (v25) edge (v26);
\draw [thick, bend left=5] (v6) edge (v7);
\draw [thick, bend left=5] (v7) edge (v8);
\draw [thick, bend left=5] (v8) edge (v5);
\draw [thick, bend left=5] (v5) edge (v6);
\end{tikzpicture}
\end{minipage}
\begin{minipage}{0.5\textwidth}
\centering
\begin{tikzpicture}

\node [fill, circle, inner sep=2, label=$u_{i,1}$] (v1) at (-2,2) {};
\node [fill, circle, inner sep=2, label=$v_{i,1}$] (v2) at (2,2) {};
\node [fill, circle, inner sep=2, label=$u_{i,2}$] (v3) at (-2,0) {};
\node [fill, circle, inner sep=2, label=$v_{i,2}$] (v6) at (2,0) {};
\node [fill, circle, inner sep=2, label=$u_{i,3}$] (v7) at (-2,-2) {};
\node [fill, circle, inner sep=2, label=$v_{i,3}$] (v8) at (2,-2) {};
\node [fill, circle, inner sep=2, label=$a_i$] (v4) at (0,1) {};
\node [fill, circle, inner sep=2, label=$b_i$] (v5) at (0,-1) {};
\draw [thick, bend left=5] (v1) edge (v2);
\draw [thick, bend left=10] (v3) edge (v4);
\draw [thick, bend right=10] (v3) edge (v5);
\draw [thick, bend left=10] (v4) edge (v6);
\draw [thick, bend right=10] (v5) edge (v6);
\draw [thick, bend right=5] (v7) edge (v8);
\node [inner sep=0] (v12) at (1,2.1) {};
\node [inner sep=0] (v11) at (1,0.6) {};
\node [inner sep=0] (v10) at (1,-0.6) {};
\node [inner sep=0] (v9) at (1,-2.1) {};
\draw [thick, dotted, bend right] (v9) edge (v10);
\draw [thick, dotted, bend right] (v11) edge (v12);

\draw [thick, bend left=10] (v3) edge (v5);
\draw [thick, bend left=10] (v5) edge (v6);
\node [inner sep=0] (v14) at (-0.75,-0.5) {};
\node [inner sep=0] (v13) at (0.75,-0.5) {};
\draw [thick, bend right=10] (v3) edge (v4);
\draw [thick, bend right=10] (v4) edge (v6);
\node [inner sep=0] (v15) at (-0.75,0.5) {};
\node [inner sep=0] (v16) at (0.75,0.5) {};
\draw [thick, dotted, bend right] (v13) edge (v14);
\draw [thick, dotted, bend right] (v15) edge (v16);
\end{tikzpicture}
\end{minipage}
\caption{On the left is the uncutable gadget on 8 vertices. On the right is the variable gadget. The edge conflicts are indicated as dotted lines.} \label{fig_gadgets}
\end{figure}

\begin{theorem} \label{thm_np}
\cfc{} is \NP{}-complete, even when the input graph has maximum degree at most five and the conflict graph is 1-regular.
\end{theorem}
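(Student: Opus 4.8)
The plan is to reduce from \cleanSAT{}, which the excerpt states is \NP-complete via Lemma~6 of~\cite{cygan2017hitting}; membership of \cfc{} in \NP{} is immediate, since a purported conflict-free cut is a polynomially checkable certificate (verify that deleting it disconnects $G$ and that it induces no edge of $\hat G$). So the whole work is the hardness construction, and it must simultaneously respect $\Delta(G)\le 5$ and $1$-regularity of $\hat G$.

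Given a clean 3-CNF formula $\calF$ on variables $x_1,\dots,x_n$ and clauses $C_1,\dots,C_m$ (where, after renaming, each $x_i$ occurs twice positively and once negatively), I would build $(G,\hat G)$ from three kinds of pieces. First, one copy of the variable gadget $H_i$ per variable: its asymmetric shape --- two ``isolated-edge'' terminals and one ``$4$-cycle'' terminal --- is meant to mirror the two positive and one negative occurrences of $x_i$, the intention being that in any conflict-free cut $F$ of $G$ whose restriction to $H_i$ keeps the non-terminal vertices of $H_i$ on one fixed side, $F$ induces on $H_i$ exactly one of two ``states'', read as $x_i=\mathrm{true}$ or $x_i=\mathrm{false}$. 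Second, for each clause $C_j$ a small connecting structure (a constant number of edges, possibly around one extra vertex) wired to the three occurrence-terminals of the variable gadgets of its literals, designed so that it can be completed to a conflict-free cut precisely when at least one of those three terminals is ``activated'' by $F$, i.e. when $C_j$ has a satisfied literal. Third, copies of the uncutable gadget attached along the non-terminal vertices of the $H_i$'s and of the clause structures; since an uncutable gadget admits no conflict-free cut, every such copy is monochromatic in any conflict-free cut of $G$, which ``pins'' the whole auxiliary frame of $G$ to one side of the cut and leaves only the terminals as degrees of freedom. The conflicts of $\hat G$ are those internal to the gadgets together with a pairing of the extra clause edges into conflicting pairs, chosen so that $\hat G$ is exactly $1$-regular; all attachments are made at vertices of gadget-degree at most $4$, and each vertex is used by at most one attachment, so that $\Delta(G)\le 5$, which is consistent with the uncutable gadget already carrying two vertices of degree $5$.

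For correctness, the forward direction takes a satisfying assignment $\beta$, puts the frame and all ``inactive'' terminals on side $A$, and moves to side $B$ the terminals of each $H_i$ prescribed by the state corresponding to $\beta(x_i)$ together with, in each clause structure, an occurrence-terminal of a literal satisfied by $\beta$; a short check shows the crossing edge set is a nonempty cut with no conflicting pair. The backward direction takes a conflict-free cut $F$, uses the uncutability of the attached gadgets to conclude that the frame is monochromatic --- say on side $A$ after swapping $A$ and $B$ --- then classifies, by a finite case analysis of how a conflict-free cut can meet a variable gadget with pinned non-terminal vertices, the state of each $H_i$ and thereby defines $\beta$, and finally invokes the clause structures to argue that for every $C_j$ at least one of its three terminals lies on side $B$, i.e. $\beta$ satisfies $C_j$.

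I expect the delicate part to be the backward direction: the case analysis showing that, under the pinning by uncutable gadgets and the $1$-regularity of $\hat G$, the only conflict-free ways to meet a variable gadget are the two intended ones, plus the analogous statement for the clause structures; everything hinges on the gadgets having been designed so these case analyses close. A second, more mechanical source of friction is the bookkeeping needed to keep $\Delta(G)\le 5$ and make $\hat G$ precisely $1$-regular while wiring clauses to variables. Finally, since $|V(G)|+|E(G)|=\calO(n+m)$ and the construction is computable in linear time, the reduction yields \NP-completeness under the stated restrictions, and --- as flagged in the introduction --- composing it with the sparsification-based ETH hardness of \SAT{} gives a $2^{\Omega(|V(G)|+|E(G)|)}$ lower bound as well.
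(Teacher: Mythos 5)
Your proposal has the right starting point (reduction from \cleanSAT{}, trivial membership in \NP{}) and correctly identifies the roles the two gadget types should play, but as it stands it is a plan rather than a proof, and the missing piece is precisely the heart of the argument. You never specify the clause structure, never carry out the two-state case analysis for the variable gadget, and never do the wiring that makes $\hat G$ exactly $1$-regular; you explicitly defer all of this (``everything hinges on the gadgets having been designed so these case analyses close''). More importantly, your backward direction has a structural hole: pinning the ``frame'' with uncutable gadgets only tells you that whatever the cut separates is a union of terminal vertices, but nothing in your architecture forces \emph{every} clause gadget to be engaged by the cut. A conflict-free cut only has to disconnect the graph somewhere, so it could in principle split off the terminals of a single variable gadget (or some other small piece) without any clause being ``checked'', and then the assignment you read off need not satisfy $\calF$. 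Any correct construction needs a global mechanism that makes every clause participate in every conflict-free cut, and your sketch does not supply one.

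The paper's construction resolves exactly this issue, by a different architecture: each clause $C$ becomes an $s$-$t$ path $P_C$ of length $|C|$ between two designated vertices $s$ and $t$, parallel ``shadow'' copies of the path edges with conflicts between copies from different positions ensure that a conflict-free cut severs each $P_C$ in exactly one position (the chosen satisfying literal), the negative occurrence of each variable is expanded into a small cycle through new vertices $a_i,b_i$ with extra $st$-edges and cross-conflicts to the two positive-occurrence edges to enforce consistency of the assignment, and a separate claim (Claim~\ref{claim-st-cut}) shows that \emph{any} conflict-free cut must separate $s$ from $t$ --- this is what forces all clause paths, hence all clauses, to be cut. Uncutable gadgets enter only at the very end, not to pin a frame but to replace the vertices of the intermediate multigraph so as to obtain a simple graph of maximum degree five while preserving the existence of a conflict-free cut. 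If you want to salvage your approach you would have to invent an analogue of that $s$-$t$ anchoring (or prove an analogue of Claim~\ref{claim-st-cut}); without it, the reduction's correctness cannot be established.
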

\begin{proof}
\cfc{} clearly is in \NP{}.
It remains to prove \NP{}-hardness.
For this we reduce \cleanSAT{}, which is \NP-complete, to \cfc{}.
Let $\calF$ be a $n$-variable $m$-clause instance of \cleanSAT{},
and let $x_1, \dots, x_n$ denote the variables of $\calF$.
Note that $m = \calO(n)$.
The properties of the formula $\calF$ allow us to view it as a family of literal sets, which we will use at times.

We construct a graph $G$ with edge conflicts as an instance of \cfc{} as follows.
First, we construct a multigraph $G'$ with edge conflicts and two designated vertices $s$ and $t$.
The induced conflict graph will be 1-regular.
If $G'$ has a conflict-free cut, then it will be a conflict-free $s$-$t$ cut.
Moreover, $\calF$ is satisfiable if and only if $G'$ has a conflict-free cut.
After that, we will show how to obtain a (simple) graph $G$ with maximum degree five from the multigraph $G'$.
We will construct $G$ such that it has a conflict-free cut if and only if $G'$ has one. 
Since the described construction is possible in polynomial time, the reduction is then complete.

We begin with the construction of $G'$.
Let $s$ and $t$ be two distinct vertices, which we add to $G'$.
For every clause $C \in \calF$ we add an $s$-$t$ path $P_C$ of length $|C|$ to $G'$.
The paths $P_C$ are internally (vertex-)disjoint.
For every clause $C \in \calF$, and every pair of distinct edges $e,f \in E(P_C)$, we create a copy of $e$ and a copy of $f$ to obtain two new edges $e'$ and $f'$, respectively, and add an edge conflict between $e'$ and $f'$.
We call the newly created copies of $e \in E(P_C)$ the \emph{shadow edges of $e$}, and denote the multiset of $e$ together with its shadow edges by $S(e)$. We refer to the union of all shadow edges plainly as the \emph{shadow edges}.
At this point, $G'$ looks like the multigraph in Figure~\ref{fig_step1}.

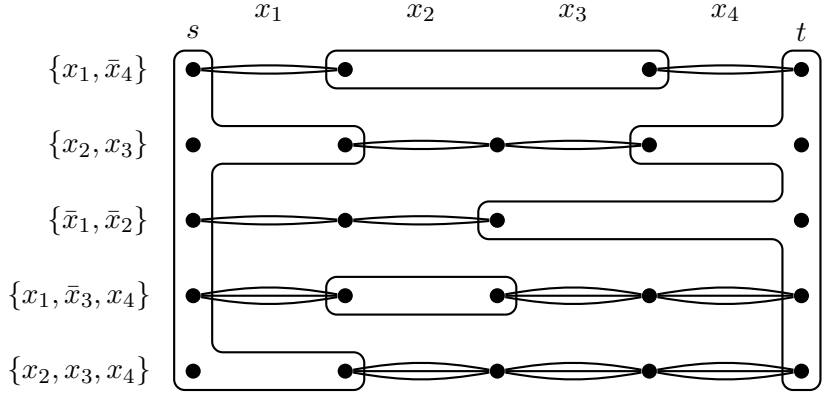
\begin{figure}
\centering
\begin{tikzpicture}

\node [circle, fill, inner sep=2] (l1) at (-4,2) {};
\node [circle, fill, inner sep=2] (l2) at (-4,1) {};
\node [circle, fill, inner sep=2] (l3) at (-4,0) {};
\node [circle, fill, inner sep=2] (l4) at (-4,-1) {};
\node [circle, fill, inner sep=2] (l5) at (-4,-2) {};

\node [circle, fill, inner sep=2] (u1) at (-2,2) {};
\node [circle, fill, inner sep=2] (u2) at (-2,1) {};
\node [circle, fill, inner sep=2] (u3) at (-2,0) {};
\node [circle, fill, inner sep=2] (u4) at (-2,-1) {};
\node [circle, fill, inner sep=2] (u5) at (-2,-2) {};

\node [circle, fill, inner sep=2] (v2) at (0,1) {};
\node [circle, fill, inner sep=2] (v3) at (0,0) {};
\node [circle, fill, inner sep=2] (v4) at (0,-1) {};
\node [circle, fill, inner sep=2] (v5) at (0,-2) {};

\node [circle, fill, inner sep=2] (w1) at (2,2) {};
\node [circle, fill, inner sep=2] (w2) at (2,1) {};
\node [circle, fill, inner sep=2] (w4) at (2,-1) {};
\node [circle, fill, inner sep=2] (w5) at (2,-2) {};

\node [circle, fill, inner sep=2] (r1) at (4,2) {};
\node [circle, fill, inner sep=2] (r2) at (4,1) {};
\node [circle, fill, inner sep=2] (r3) at (4,0) {};
\node [circle, fill, inner sep=2] (r4) at (4,-1) {};
\node [circle, fill, inner sep=2] (r5) at (4,-2) {};

\draw [thick, rounded corners] (-4.25,0) -- (-4.25,2.25) -- (-3.75,2.25) -- (-3.75,1.25) -- (-1.75,1.25) -- (-1.75,0.75) -- (-3.75,0.75) -- (-3.75,-1.75) -- (-1.75,-1.75) -- (-1.75,-2.25) -- (-3.75,-2.25) -- (-4.25,-2.25) -- (-4.25,0);
\draw [thick, rounded corners] (4.25,0) -- (4.25,2.25) -- (3.75,2.25) -- (3.75,1.25) -- (1.75,1.25) -- (1.75,0.75) -- (3.75,0.75) -- (3.75,0.25) -- (-0.25,0.25) -- (-0.25,-0.25) -- (3.75,-0.25) -- (3.75,-2.25) -- (4.25,-2.25) -- (4.25,0);

\draw [thick, rounded corners] (-2.25,2.25) rectangle (2.25,1.75);
\draw [thick, bend right=-5] (l1) edge (u1);
\draw [thick, bend right=5] (l1) edge (u1);
\draw [thick, bend right=-5] (w1) edge (r1);
\draw [thick, bend right=5] (w1) edge (r1);

\draw [thick, bend right=-5] (u2) edge (v2);
\draw [thick, bend right=5] (u2) edge (v2);
\draw [thick, bend right=-5] (v2) edge (w2);
\draw [thick, bend right=5] (v2) edge (w2);

\draw [thick, bend right=-5] (l3) edge (u3);
\draw [thick, bend right=5] (l3) edge (u3);
\draw [thick, bend right=-5] (u3) edge (v3);
\draw [thick, bend right=5] (u3) edge (v3);

    \draw [thick, rounded corners] (-2.25,-1.25) rectangle (0.25,-0.75);
    \draw [thick, bend right=-10] (l4) edge (u4);
\draw [thick, bend right=0] (l4) edge (u4);
\draw [thick, bend right=10] (l4) edge (u4);
\draw [thick, bend right=-10] (v4) edge (w4);
\draw [thick, bend right=0] (v4) edge (w4);
\draw [thick, bend right=10] (v4) edge (w4);
\draw [thick, bend right=-10] (w4) edge (r4);
\draw [thick, bend right=0] (w4) edge (r4);
\draw [thick, bend right=10] (w4) edge (r4);

\draw [thick, bend right=-10] (u5) edge (v5);
\draw [thick, bend right=0] (u5) edge (v5);
\draw [thick, bend right=10] (u5) edge (v5);
\draw [thick, bend right=-10] (v5) edge (w5);
\draw [thick, bend right=0] (v5) edge (w5);
\draw [thick, bend right=10] (v5) edge (w5);
\draw [thick, bend right=-10] (w5) edge (r5);
\draw [thick, bend right=0] (w5) edge (r5);
\draw [thick, bend right=10] (w5) edge (r5);

\node at (-4,2.5) {$s$};
\node at (4,2.5) {$t$};
\node at (-3,2.75) {$x_1$};
\node at (-1,2.75) {$x_2$};
\node at (1,2.75) {$x_3$};
\node at (3,2.75) {$x_4$};

\node at (-5.25,2) {$\{ x_1, \bar{x}_4 \}$};
\node at (-5.25,1) {$\{ x_2, x_3 \}$};
\node at (-5.25,0) {$\{ \bar{x}_1, \bar{x}_2 \}$};
\node at (-5.5,-1) {$\{ x_1, \bar{x}_3, x_4 \}$};
\node at (-5.5,-2) {$\{ x_2, x_3, x_4 \}$};
\end{tikzpicture}
\caption{The unfinished multigraph $G'$ right after adding the copied edges for the family of clauses $\calF = \left\{ \{x_1, \bar{x}_4\}, \{x_2, x_3\}, \{\bar{x}_1, \bar{x}_2\}, \{x_1, \bar{x}_3, x_4\}, \{x_2, x_3, x_4\} \right\}$. The vertices in a rectangle are thought of as identified, and the edge conflicts are not drawn.}
\label{fig_step1}
\end{figure}

If $x_{i_1}, \dots, x_{i_k}$, $i_1 < \dots < i_k$, are the variables of a clause $C \in \calF$, then we \emph{associate} the $j$-th edge of $P_C$ (as seen from $s$) to the variable $x_{i_j}$.
Now we modify the paths $P_C$ and add edge conflicts for all $i \in [n]$, thereby implicitly constructing some variable gadgets.
Let $C \in \calF$ be the clause where $x_i$ occurs negatively, and let $u_{i,2}v_{i,2}$ be the edge associated to $x_i$ in the path $P_C$.
(The indices in this paragraph refer to the variable gadget depicted in Figure~\ref{fig_gadgets} on the right.)
We remove $u_{i,2}v_{i,2}$ and add two paths $u_{i,2}a_iv_{i,2}$ and $u_{i,2}b_iv_{i,2}$ of length two between $u_{i,2}$ and $v_{i,2}$, where $a_i$ and $b_i$ are two new vertices.
For $u_{i,2}a_i$ we add a new $st$-edge conflicting with $u_{i,2}a_i$, and for $u_{i,2}b_i$ we add a new $st$-edge conflicting with $u_{i,2}b_i$, too.
We create a copy of $u_{i,2}a_i$ and a copy of $a_iv_{i,2}$, and add an edge conflict between the newly created copies.
We create a copy of $u_{i,2}b_i$ and $b_iv_{i,2}$, and add an edge conflict between the newly created copies.
Moreover, we add edge conflicts between $u_{i,1}v_{i,1}$ and $a_iv_{i,2}$, and between $u_{i,3}v_{i,3}$ and $b_iv_{i,2}$, where $u_{i,1}v_{i,1}$ and $u_{i,3}v_{i,3}$ are the remaining edges associated to (the positive occurrences of) $x_i$.
We have created a variable gadget for $x_i$ as in Figure~\ref{fig_gadgets}.

\begin{figure}
\centering
\begin{tikzpicture}
\node [circle, fill, inner sep=2] (l1) at (-4,2) {};
\node [circle, fill, inner sep=2] (l2) at (-4,1) {};
\node [circle, fill, inner sep=2] (l3) at (-4,0) {};
\node [circle, fill, inner sep=2] (l4) at (-4,-1) {};
\node [circle, fill, inner sep=2] (l5) at (-4,-2) {};

\node [circle, fill, inner sep=2] (lu3o) at (-3,0.375) {};
\node [circle, fill, inner sep=2] (lu3u) at (-3,-0.375) {};

\node [circle, fill, inner sep=2] (u1) at (-2,2) {};
\node [circle, fill, inner sep=2] (u2) at (-2,1) {};
\node [circle, fill, inner sep=2] (u3) at (-2,0) {};
\node [circle, fill, inner sep=2] (u4) at (-2,-1) {};
\node [circle, fill, inner sep=2] (u5) at (-2,-2) {};

\node [circle, fill, inner sep=2] (uv3o) at (-1,0.375) {};
\node [circle, fill, inner sep=2] (uv3u) at (-1,-0.375) {};

\node [circle, fill, inner sep=2] (v2) at (0,1) {};
\node [circle, fill, inner sep=2] (v3) at (0,0) {};
\node [circle, fill, inner sep=2] (v4) at (0,-1) {};
\node [circle, fill, inner sep=2] (v5) at (0,-2) {};

\node [circle, fill, inner sep=2] (vw4o) at (1,-0.625) {};
\node [circle, fill, inner sep=2] (vw4u) at (1,-1.375) {};

\node [circle, fill, inner sep=2] (w1) at (2,2) {};
\node [circle, fill, inner sep=2] (w2) at (2,1) {};
\node [circle, fill, inner sep=2] (w4) at (2,-1) {};
\node [circle, fill, inner sep=2] (w5) at (2,-2) {};

\node [circle, fill, inner sep=2] (wr1o) at (3,2.375) {};
\node [circle, fill, inner sep=2] (wr1u) at (3,1.625) {};

\node [circle, fill, inner sep=2] (r1) at (4,2) {};
\node [circle, fill, inner sep=2] (r2) at (4,1) {};
\node [circle, fill, inner sep=2] (r3) at (4,0) {};
\node [circle, fill, inner sep=2] (r4) at (4,-1) {};
\node [circle, fill, inner sep=2] (r5) at (4,-2) {};

\draw [thick, rounded corners] (-4.25,0) -- (-4.25,2.25) -- (-3.75,2.25) -- (-3.75,1.25) -- (-1.75,1.25) -- (-1.75,0.75) -- (-3.75,0.75) -- (-3.75,-1.75) -- (-1.75,-1.75) -- (-1.75,-2.25) -- (-3.75,-2.25) -- (-4.25,-2.25) -- (-4.25,0);
\draw [thick, rounded corners] (4.25,0) -- (4.25,2.25) -- (3.75,2.25) -- (3.75,1.25) -- (1.75,1.25) -- (1.75,0.75) -- (3.75,0.75) -- (3.75,0.25) -- (-0.25,0.25) -- (-0.25,-0.25) -- (3.75,-0.25) -- (3.75,-2.25) -- (4.25,-2.25) -- (4.25,0);

\draw [thick, rounded corners] (-2.25,2.25) rectangle (2.25,1.75);
\draw [thick, bend right=-5] (l1) edge (u1);
\draw [thick, bend right=5] (l1) edge (u1);
\draw [thick, bend right=0] (w1) edge (r1);

\draw [thick, bend right=0] (w1) edge (wr1o);
\draw [thick, bend right=-15] (w1) edge (wr1o);
\draw [thick, bend right=0] (wr1o) edge (r1);
\draw [thick, bend right=-15] (wr1o) edge (r1);
\draw [thick, bend right=0] (w1) edge (wr1u);
\draw [thick, bend right=15] (w1) edge (wr1u);
\draw [thick, bend right=0] (wr1u) edge (r1);
\draw [thick, bend right=15] (wr1u) edge (r1);

\draw [thick, bend right=-5] (u2) edge (v2);
\draw [thick, bend right=5] (u2) edge (v2);
\draw [thick, bend right=-5] (v2) edge (w2);
\draw [thick, bend right=5] (v2) edge (w2);

\draw [thick, bend right=0] (l3) edge (u3);
\draw [thick, bend right=0] (u3) edge (v3);

\draw [thick, bend right=0] (l3) edge (lu3o);
\draw [thick, bend right=-15] (l3) edge (lu3o);
\draw [thick, bend right=0] (lu3o) edge (u3);
\draw [thick, bend right=-15] (lu3o) edge (u3);
\draw [thick, bend right=0] (l3) edge (lu3u);
\draw [thick, bend right=15] (l3) edge (lu3u);
\draw [thick, bend right=0] (lu3u) edge (u3);
\draw [thick, bend right=15] (lu3u) edge (u3);

\draw [thick, bend right=0] (u3) edge (uv3o);
\draw [thick, bend right=-15] (u3) edge (uv3o);
\draw [thick, bend right=0] (uv3o) edge (v3);
\draw [thick, bend right=-15] (uv3o) edge (v3);
\draw [thick, bend right=0] (u3) edge (uv3u);
\draw [thick, bend right=15] (u3) edge (uv3u);
\draw [thick, bend right=0] (uv3u) edge (v3);
\draw [thick, bend right=15] (uv3u) edge (v3);

\draw [thick, rounded corners] (-2.25,-1.25) rectangle (0.25,-0.75);
\draw [thick, bend right=-10] (l4) edge (u4);
\draw [thick, bend right=0] (l4) edge (u4);
\draw [thick, bend right=10] (l4) edge (u4);
\draw [thick, bend right=-5] (v4) edge (w4);
\draw [thick, bend right=5] (v4) edge (w4);
\draw [thick, bend right=-10] (w4) edge (r4);
\draw [thick, bend right=0] (w4) edge (r4);
\draw [thick, bend right=10] (w4) edge (r4);

\draw [thick, bend right=0] (v4) edge (vw4o);
\draw [thick, bend right=-15] (v4) edge (vw4o);
\draw [thick, bend right=0] (vw4o) edge (w4);
\draw [thick, bend right=-15] (vw4o) edge (w4);
\draw [thick, bend right=0] (v4) edge (vw4u);
\draw [thick, bend right=15] (v4) edge (vw4u);
\draw [thick, bend right=0] (vw4u) edge (w4);
\draw [thick, bend right=15] (vw4u) edge (w4);

\draw [thick, bend right=-10] (u5) edge (v5);
\draw [thick, bend right=0] (u5) edge (v5);
\draw [thick, bend right=10] (u5) edge (v5);
\draw [thick, bend right=-10] (v5) edge (w5);
\draw [thick, bend right=0] (v5) edge (w5);
\draw [thick, bend right=10] (v5) edge (w5);
\draw [thick, bend right=-10] (w5) edge (r5);
\draw [thick, bend right=0] (w5) edge (r5);
\draw [thick, bend right=10] (w5) edge (r5);

\node at (-4,2.5) {$s$};
\node at (4,2.5) {$t$};
\node at (-3,2.75) {$x_1$};
\node at (-1,2.75) {$x_2$};
\node at (1,2.75) {$x_3$};
\node at (3,2.75) {$x_4$};

\node at (-5.25,2) {$\{ x_1, \bar{x}_4 \}$};
\node at (-5.25,1) {$\{ x_2, x_3 \}$};
\node at (-5.25,0) {$\{ \bar{x}_1, \bar{x}_2 \}$};
\node at (-5.5,-1) {$\{ x_1, \bar{x}_3, x_4 \}$};
\node at (-5.5,-2) {$\{ x_2, x_3, x_4 \}$};
\end{tikzpicture}
\caption{The finished multigraph $G'$ for the same family of clauses as in Figure~\ref{fig_step1}. Again, the vertices in a rectangle are thought of as identified, and the edge conflicts as well as the $st$-edges are not drawn.}
\label{fig_step2}
\end{figure}
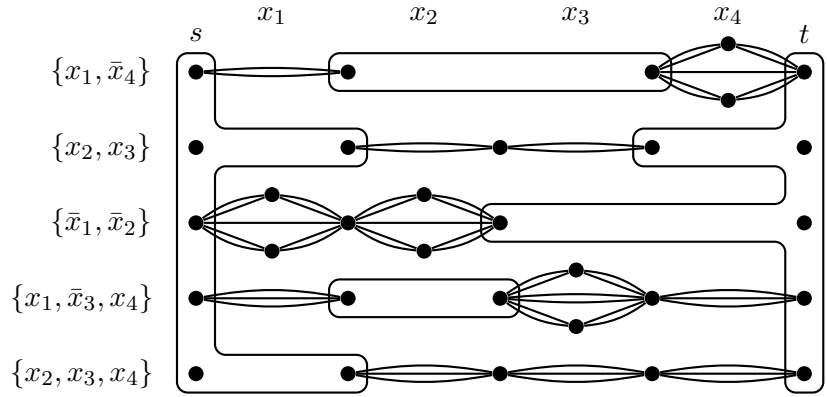

This finishes the construction of $G'$; see Figure~\ref{fig_step2} for an example.
Note that each edge has exactly one edge conflict.
The graph $G'$ has at most $2 + 2m = \calO(n)$ vertices and at most $16n + 2n = \calO(n)$ edges, since every variable occurs exactly three times in $\calF$.
We conclude the construction with a few remarks.
In the later, we will refer to the induced subgraphs obtained by modifying the paths $P_C$ (that is, replacing an edge $uv \in E(P_C)$ by a cycle of length four, wherein $u$ and $v$ are non-adjacent, for variable gadgets) also as $P_C$.
Note that the subgraphs $P_C$ do not comprise copied edges of any form.
Also, we will continue to refer to the copies of the removed edges $u_{i,2}v_{i,2}$ as the copies of $u_{i,2}v_{i,2}$.
To the vertices of the variable gadgets we refer to as in Figure~\ref{fig_gadgets}.


Claim~\ref{claim-st-cut} asserts that any conflict-free cut of $G'$ is an $s$-$t$ cut, and is used later in the proof.
\begin{claim} \label{claim-st-cut}
If there is a conflict-free cut $F$ in $G'$, then it separates $s$ and $t$.
\end{claim}
\renewcommand{\qedsymbol}{$\blacksquare$}
\begin{proof}
Assume for a contradiction that there is a conflict-free cut $F$ of $G'$ that does not separate $s$ and $t$. 
We may assume $F$ to be minimal, that is, $G'-F$ has exactly two components, and $s$ and $t$ are contained in a single one of them.
Let $v$ be a vertex of a component of $G-F$ that does not contain $s$ and $t$, and
let $C$ be the unique clause such that $v \in V(P_C)$.
Note that the set of edges incident to a vertex $a_i$ or $b_i$, respectively, are not conflict-free.
Therefore we may assume that $v$ is incident to copied edges.
The conflicts of the copied edges only admit $P_C$ to be separated into two components by a conflict-free cut. 
Since $F$ is conflict-free, there is an $s$-$v$ or a $v$-$t$ path in $G-F$, a contradiction.
\end{proof}
\renewcommand{\qedsymbol}{$\square$}

We verify that $\calF$ is satisfiable if and only if $G'$ admits a conflict-free cut.
Assume, for one direction, that $\alpha$ is a satisfying assignment of $\calF$. Let
\begin{align*}
F' &= 
\bigcup_{\substack{i \in [n]: \\ x_i \text{ is positive} \\ \text{and }\alpha(x_i)=\texttt{true}}} 
\left( S(u_{i,1}v_{i,1}) \cup S(u_{i,3}v_{i,3}) \right)
\quad\cup \bigcup_{\substack{i \in [n]: \\ x_i \text{ is positive} \\ \text{and }\alpha(x_i)=\texttt{false}}} 
\left( S(a_iv_{i,2}) \cup S(b_iv_{i,2}) \right).
\end{align*}
The multiset of edges $F'$ may not be conflict-free, but we can choose a conflict-free subset in the following way. 
Since $\alpha$ is a satisfying assignment of $\calF$, for every clause $C \in \calF$, there is a variable $x_{i_C}$ that occurs in $C$ and satisfies $C$ under the assignment $\alpha$;
in other words, for every clause $C \in \calF$, there is an index $i_C \in [n]$ such that
\begin{itemize}
    \item there is $j_C \in [3]$ and all edges of the form $u_{i_C,j_C}v_{i_C,j_C}$ are in $F'$, or
    \item all edges of the form $a_{i_C}v_{i_C,2}$ or $b_{i_C}v_{i_C,2}$ are in $F'$.
\end{itemize}
For every $C \in \calF$, we fix exactly one such occurrence, and we let $F_1$ be the subset of $F'$ consisting of these edges.
Moreover, let $F_2$ be the set of $st$-edges in $G'$.
Then $F = F_1 \cup F_2$ is a conflict-free $s$-$t$ cut.

For the other direction, let $F$ be a conflict-free cut of $G'$. 
As we have seen in Claim~\ref{claim-st-cut}, $F$ is actually a conflict-free $s$-$t$ cut of $G'$. 
If $F_2$ denotes the set of $st$-edges in $G'$ again, then we must have $F_2 \subseteq F$.
Together with the fact that $F$ is conflict-free, this implies, if there is only one edge $a_iv_{i,2}$ or $b_iv_{i,2}$ in $F$, then there still is an $u_{i,2}$-$v_{i,2}$ path over $a_i$ or $b_i$ in $G'-F$. 
Therefore we may safely remove such an edge from $F$, and add its conflicting edge $u_{i,j}v_{i,j}$, $j \in \{1,3\}$, to $F$. 
In particular, we may assume without loss of generality that for every 
variable $x_i$ either $u_{i,1}v_{i,1}$ and $u_{i,3}v_{i,3}$ is in $F$, or $a_iv_{i,2}$ and $b_iv_{i,2}$ is in $F$. 
We define a satisfying assignment $\alpha$ of $\calF$.
For all $i \in [n]$, we define
\begin{align*}
    \alpha(x_i) &= \begin{cases}
    \texttt{true}, &\text{if $u_{i,1}v_{i,1}, u_{i,3}v_{i,3} \in F$} \\
    \texttt{false}, &\text{if $a_iv_{i,2}, b_iv_{i,2} \in F$}
    \end{cases}.
\end{align*}
With our assumptions, the assignment $\alpha$ is well-defined, and since $F$ is an $s$-$t$ cut, it satisfies $\calF$.

To complete the proof, we show how to construct a simple graph $G$ with edge conflicts such that $G$ has a conflict-free cut if and only if $G'$ has a conflict-free cut. 
To obtain $G$ from $G'$, we simply replace every vertex of $G'$ with a sufficiently large uncutable gadget, and choose different ends in the uncutable gadgets for every multiedge incident to the replaced vertex in $G'$.
The idea is that each uncutable gadgets must completely be completely contained in one component after removing a conflict-free cut, it therefore mimics the behaviour of a vertex of $G$.
Let $v \in V(G)$.
If $v \neq s$ and $v \neq t$, then $d_{G'}(v) = \calO(1)$, and we only need an uncutable gadget of order $\calO(1)$ to replace $v$.
If $v = s$ or $v = t$, then $d_{G'}(v) = \calO(n)$, since there are $\calO(n)$ many $st$-edges in $G'$ and $m = \calO(n)$ clauses in $\calF$.
Thus, the constructed graph $G$ has $\calO(n)$ vertices and $\calO(n)$ edges.
Since all vertices of an uncutable gadget have degree four except two with degree five, the graph $G$ can be constructed to have maximum degree five.
Therefore $G$ has the desired properties.
The construction can clearly be done in polynomial time, which completes the proof.
\end{proof}

The proof of Theorem~\ref{thm_np} implies a lower bound based on the Exponential Time Hypothesis (ETH) of Impagliazzo and Paturi~\cite{impagliazzo2001complexity}.
We omit stating its formal definition, since we rely on the following result of Cygan, Marx, Pilipczuk and Pilipczuk~\cite{cygan2017hitting}, which in turn is based on the Sparsification Lemma~\cite{impagliazzo2001problems}.
\begin{lemma}[Lemma 6 in \cite{cygan2017hitting}] \label{lem_eth_cleansat}
Unless ETH fails, there does not exist an algorithm that can resolve satisfiability of a clean $n$-variable $m$-clause 3-CNF formula in time $2^{o(n+m)}$.
\end{lemma}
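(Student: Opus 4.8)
The plan is to prove this as a standard ETH-transfer statement, by combining the Sparsification Lemma with a linear-size, satisfiability-preserving reduction from \SAT{} to \cleanSAT{}. First I would recall that ETH rules out a $2^{o(n)}$-time algorithm for 3-SAT on $n$ variables, and that the Sparsification Lemma~\cite{impagliazzo2001problems} strengthens this to: unless ETH fails, 3-SAT on instances with $n$ variables and $m$ clauses admits no $2^{o(n+m)}$-time algorithm (it suffices to handle instances with $m = O(n)$, on which $n+m = \Theta(n)$). It then remains to give a polynomial-time map $\varphi \mapsto \varphi'$ from general 3-CNF formulas to clean 3-CNF formulas with $n', m' = O(n+m)$ and $\varphi'$ satisfiable iff $\varphi$ is: a $2^{o(n'+m')}$-time algorithm for \cleanSAT{} would, composed with this reduction, decide 3-SAT in $2^{o(n+m)}$ time, a contradiction.

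For the reduction I would first run the usual local simplification rules to a common fixpoint: delete tautological clauses, contract repeated literals inside a clause (so every surviving clause has two or three literals on distinct variables), apply unit propagation (returning a trivial no-instance if the empty clause ever appears), and apply the pure-literal rule. Each rule strictly shrinks the instance and the whole process runs in linear total time, so it terminates quickly; the resulting formula $\psi$ then satisfies simultaneously that every clause has two or three distinct-variable literals, that $\psi$ has no unit clause, and that $\psi$ has no pure literal, hence every variable of $\psi$ occurs with \emph{both} polarities and thus at least twice (if $\psi$ becomes empty, output a trivial yes-instance). Now I would apply the standard variable-splitting gadget to each variable $x$ of $\psi$: if $x$ has $k \geq 2$ occurrences, introduce fresh variables $x_1,\dots,x_k$, replace the $j$-th occurrence of $x$ by $x_j$, and add the implication cycle $(\neg x_1 \vee x_2),(\neg x_2 \vee x_3),\dots,(\neg x_k \vee x_1)$, which forces $x_1 = \dots = x_k$ and hence preserves satisfiability. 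In the resulting formula $\varphi'$, each fresh variable $x_j$ occurs exactly three times (once in the image of an original clause, once positively and once negatively in the cycle) and with both polarities; the cycle clauses are binary on distinct variables; and each image clause still has two or three distinct-variable literals. So $\varphi'$ is clean, and since $\sum_x k_x$ equals the number of literal occurrences of $\psi$, which is $O(m)$, we get $n', m' = O(m) = O(n+m)$. Finally, to also meet the convention that each variable occurs twice positively, I would flip, for every fresh variable that ended up occurring twice negatively, its name to its negation throughout $\varphi'$.

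The step I expect to require the most care is pinning down the fixpoint of the simplification rules: one must check that running pure-literal elimination and unit propagation to completion (with their cascading effects) leaves precisely the normal form needed to feed the splitting gadget — every clause of size two or three on distinct variables and every variable present in both polarities and hence at least twice — and that this cascading never blows up the size. Once that invariant is established, the remaining obligations are routine: each simplification rule is sound, the implication cycle is logically equivalent to a chain of equalities so splitting preserves satisfiability, the size bound $n'+m' = O(n+m)$ is immediate, and the claimed ETH lower bound follows by contraposition.
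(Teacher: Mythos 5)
Your argument is correct, and it is essentially the argument behind the result as the paper uses it: the paper itself gives no proof of this lemma, citing Lemma~6 of \cite{cygan2017hitting}, whose lower bound is obtained exactly as you propose, namely by combining the Sparsification Lemma~\cite{impagliazzo2001problems} with a linear-size, satisfiability-preserving transformation into clean form (simplification to remove unit clauses, repeated/pure literals and tautologies, followed by variable splitting with an implication cycle, which makes every fresh variable occur exactly three times with both polarities). Your construction verifiably yields a clean formula with $n',m'=O(n+m)$, so the contrapositive ETH transfer goes through as claimed.
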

The following corollary is a direct consequence of Lemma~\ref{lem_eth_cleansat} and the proof of Theorem~\ref{thm_np}.

\begin{corollary}
Unless ETH fails, there does not exist an algorithm that decides \cfc{} in time $2^{o(n)}$, where $n$ is the number of vertices of the input graph $G$. This holds, even when $G$ has maximum degree at most five and $\hat{G}$ is 1-regular.
\end{corollary}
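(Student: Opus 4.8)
The plan is to observe that the reduction built in the proof of Theorem~\ref{thm_np} is \emph{linear}: it turns a clean $n$-variable $m$-clause 3-CNF formula into an equivalent instance $(G,\hat{G})$ of \cfc{} whose input graph $G$ has only $\calO(n)$ vertices, maximum degree at most five, and $1$-regular conflict graph. Feeding this into the ETH lower bound of Lemma~\ref{lem_eth_cleansat} gives the corollary by a routine transfer argument, so there is essentially no new obstacle to overcome beyond bookkeeping the size of $G$.

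Concretely, I would argue by contradiction. Suppose there is an algorithm $\mathcal{A}$ that decides \cfc{} in time $2^{o(N)}$ on instances with maximum degree at most five and $1$-regular conflict graph, where $N$ denotes the number of vertices of the input graph. Given a clean 3-CNF formula $\calF$ with $n$ variables and $m$ clauses, I would first recall that $m=\calO(n)$: each variable occurs exactly three times, so there are $3n$ literal occurrences, and since each clause has at least two literals, $m\le \tfrac{3}{2}n$. Next I would apply the polynomial-time construction from the proof of Theorem~\ref{thm_np} to obtain $(G,\hat{G})$ with $|V(G)|=\calO(n)$, $\hat{G}$ being $1$-regular, $G$ having maximum degree at most five, and $G$ admitting a conflict-free cut if and only if $\calF$ is satisfiable. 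Running $\mathcal{A}$ on $(G,\hat{G})$ then decides satisfiability of $\calF$ in time $2^{o(|V(G)|)}=2^{o(n)}\subseteq 2^{o(n+m)}$, contradicting Lemma~\ref{lem_eth_cleansat}; hence no such $\mathcal{A}$ exists, which is the assertion of the corollary.

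I do not expect a genuine difficulty here; the only point worth double-checking is that $|V(G)|=\calO(n)$, and this was already established inside the proof of Theorem~\ref{thm_np}: the multigraph $G'$ has $2+2m=\calO(n)$ vertices and $\calO(n)$ edges, and when each vertex $v$ of $G'$ is replaced by an uncutable gadget of order $\calO(d_{G'}(v))$, the two high-degree vertices $s$ and $t$ contribute gadgets of order $\calO(n)$ while all other gadgets have order $\calO(1)$, summing to $\calO(n)$ in total. The fact that the uncutable gadget really can be built with a number of vertices linear in the number of attachment points it must supply is immediate from its definition as the square of a cycle $C_{2k}$ plus two edges. Thus the reduction is linear and the ETH bound of Lemma~\ref{lem_eth_cleansat} carries over verbatim.
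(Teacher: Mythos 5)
Your proposal is correct and matches the paper's intent exactly: the paper states the corollary as a direct consequence of Lemma~\ref{lem_eth_cleansat} and the linearity of the reduction in Theorem~\ref{thm_np} ($|V(G)|=\calO(n)$, maximum degree five, 1-regular conflict graph), and you simply spell out the routine transfer argument, including the correct bookkeeping that $m\le\tfrac{3}{2}n$ and that the gadgets replacing $s$ and $t$ have order $\calO(n)$.
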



\section{Parameterized Complexity Results} \label{sec_param}
\cfc{} is expressible in the (extended) monadic second-order logic (MSO$_2$) using a structure called \emph{hypergraph representation}.
If an instance $(G, \hat{G})$ is expressed in this way, then the \emph{primal graph} $H(G, \hat{G}) = (V, E)$ of this structure is the graph with
\begin{itemize}
    \item vertex set $V = V(G) \cup E(G)$, and
    \item edge set $E = \{ ve: v \in V(G), e \in E(G), v \in e \} \cup E(\hat{G})$.
\end{itemize}
With Courcelle's Theorem~\cite{courcelle1990graph} as stated in Flum and Grohe's book Parameterized Complexity Theory~\cite{flum2006parameterized} in Chapter~11 (where the reader can find one more on this topic), we get the following.
\begin{proposition}
\cfc{} can be decided in linear time on instances $(G, \hat{G})$ for which the primal graph $H(G, \hat{G})$ has bounded treewidth.
\end{proposition}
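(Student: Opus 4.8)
The plan is to apply Courcelle's Theorem by first exhibiting an MSO$_2$ sentence over the "hypergraph representation" structure that is satisfied exactly by those instances $(G,\hat G)$ that admit a conflict-free cut, and then invoke the standard fact (as in Flum--Grohe, Chapter 11) that MSO$_2$-model-checking is linear-time solvable on structures whose primal graph has bounded treewidth.

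First I would fix the relational structure we work over. The universe is $V(G)\cup E(G)$, and we equip it with: a unary predicate $\mathrm{Vtx}(\cdot)$ distinguishing the copies of $V(G)$ (so that $\neg\mathrm{Vtx}(x)$ marks edge-elements), the binary incidence predicate $\mathrm{Inc}(v,e)$ holding when $v\in e$ in $G$, and the binary predicate $\mathrm{Conf}(e,f)$ holding when $ef\in E(\hat G)$. This is exactly the structure whose Gaifman/primal graph is $H(G,\hat G)$ as defined just above the proposition, so "bounded treewidth of $H(G,\hat G)$" is precisely "bounded treewidth of the input structure," which is the hypothesis Courcelle's Theorem needs.

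Next I would write the sentence. We existentially quantify a set $F$ of edge-elements (this is where MSO$_2$, as opposed to MSO$_1$, is used: we quantify over a subset of $E(G)$). The sentence asserts three things: (i) $F\subseteq E(G)$, i.e.\ $\forall x\,(x\in F \rightarrow \neg\mathrm{Vtx}(x))$; (ii) $F$ is independent in $\hat G$, i.e.\ $\forall e\,\forall f\,\big((e\in F\wedge f\in F)\rightarrow \neg\mathrm{Conf}(e,f)\big)$; and (iii) $G-F$ is disconnected. For (iii) I would existentially quantify a vertex-set $A$ and assert that $A$ is a nonempty proper subset of $V(G)$ that is "closed" under edges not in $F$: formally $\exists s\,(s\in A)$, $\exists t\,(t\notin A\wedge \mathrm{Vtx}(t))$, and $\forall v\,\forall w\,\forall e\,\big((\mathrm{Inc}(v,e)\wedge\mathrm{Inc}(w,e)\wedge v\in A\wedge e\notin F)\rightarrow w\in A\big)$. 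A set $A$ with these properties exists iff some component of $G-F$ misses some vertex of $G$, i.e.\ iff $G-F$ is disconnected. Conjoining (i)--(iii) under the two second-order existential quantifiers $\exists F\,\exists A$ gives an MSO$_2$ sentence $\varphi$ with $(G,\hat G)\models\varphi$ iff $(G,\hat G)$ has a conflict-free cut.

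Finally I would invoke Courcelle's Theorem in the form of Flum--Grohe, Chapter 11: for a fixed MSO$_2$ sentence $\varphi$, checking $\mathfrak A\models\varphi$ is decidable in time $f(\lvert\varphi\rvert,\mathrm{tw})\cdot\lVert\mathfrak A\rVert$ on structures $\mathfrak A$ of treewidth $\mathrm{tw}$; since $\varphi$ is fixed and the treewidth of $H(G,\hat G)$ is bounded by hypothesis, this is linear in the size of the input. The only genuinely delicate point is making sure the chosen structure's primal graph is exactly $H(G,\hat G)$ and that "MSO$_2$ over graphs" really means "MSO$_1$ over this incidence-style structure," so that the bounded-treewidth assumption is on $H(G,\hat G)$ and not on $G$ itself (which would be a weaker hypothesis); with the predicates set up as above this is immediate, so I do not expect any real obstacle.
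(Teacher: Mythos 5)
Your proposal is correct and takes essentially the same route as the paper, which likewise just asserts MSO$_2$-expressibility of \cfc{} over the incidence-plus-conflict structure whose Gaifman graph is $H(G,\hat{G})$ and then invokes Courcelle's Theorem in the Flum--Grohe formulation; you merely spell out the sentence explicitly. One small repair to your formalization of (iii): the formula must also assert $A\subseteq V(G)$, e.g.\ $\forall x\,(x\in A\rightarrow \mathrm{Vtx}(x))$ (or require $\mathrm{Vtx}(s)$ for the witness $s\in A$), as your prose already intends, since otherwise a set $A$ consisting of a single edge-element satisfies the closure condition vacuously and (iii) would hold even when $G-F$ is connected.
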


Next, we show a fixed-parameter tractability result.
\begin{theorem}\label{vertexcover}
\cfc{} is fixed-parameter tractable with the vertex cover number of the input graph $G$ as a parameter.
\end{theorem}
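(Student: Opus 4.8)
The plan is to branch on how a hypothetical solution meets a minimum vertex cover of $G$ and then observe that the residual problem collapses to an instance of $2$-SAT.

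First I would record a normalisation: it suffices to look for a conflict-free cut of the form $\partial_G A$ with $\emptyset \ne A \subsetneq V(G)$, because if $F$ is any conflict-free cut and $A$ is the vertex set of a component of $G-F$, then $\partial_G A \subseteq F$, so $\partial_G A$ is again independent in $\hat{G}$ and still disconnects $G$. Now fix a minimum vertex cover $X$ of $G$, put $k = |X|$ (it can be found in time $\calO^*(2^k)$, or taken as given with the parameter), and let $I = V(G)\setminus X$, which is independent. A hypothetical set $A$ is fully specified by $A\cap X$ together with, for each $v\in I$, the side of the partition on which $v$ lies. So I would iterate over all $2^k$ candidate traces $A_X := A\cap X$, set $B_X := X\setminus A_X$, and for each candidate decide in polynomial time whether it extends to a conflict-free cut $\partial_G A$.

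Consider a fixed guess with $A_X\notin\{\emptyset,X\}$, so that both sides are automatically nonempty. The edges of $\partial_G A$ lying inside $X$ are exactly $F_0 := E(A_X,B_X)$, and this part is forced; if $F_0$ is not independent in $\hat{G}$, discard the guess. Every other edge of $\partial_G A$ is incident to $I$, and since $I$ is independent, placing a vertex $v\in I$ on the $A$-side adds exactly $E(\{v\},B_X)$ to the cut, while placing it on the $B$-side adds exactly $E(\{v\},A_X)$. Introduce a Boolean variable $y_v$ for each $v\in I$, with $y_v$ true meaning ``$v$ lies on the $A$-side''. A conflict inside $\partial_G A$ can only be of three kinds: an internal conflict of $F_0$, already excluded; a conflict between $F_0$ and a single $v$-contribution, or inside one $v$-contribution, which forbids one side for $v$ and is captured by a unit clause; or a conflict between a $v$-contribution and a $v'$-contribution with $v\ne v'$, which forbids one of the four side-combinations of the pair $\{v,v'\}$ and is captured by a binary clause. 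Collecting these clauses gives a $2$-CNF formula, computable in polynomial time since each set $E(\{v\},\cdot)$ has at most $k$ edges, whose satisfying assignments correspond exactly to placements of $I$ that make $\partial_G A$ conflict-free; I would then invoke the linear-time $2$-SAT algorithm.

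The remaining guesses $A_X\in\{\emptyset,X\}$ correspond, up to swapping $A$ and $V(G)\setminus A$, to the case $A\subseteq I$, where $\partial_G A = \bigcup_{v\in A}\partial_G v$; since enlarging $A$ only enlarges this edge set, here it is enough to test whether $\partial_G v$ is independent in $\hat{G}$ for some single $v\in I$, which is trivial. Running over all $2^k$ traces plus this extra check decides \cfc{} in time $\calO^*(2^k)$, which gives the claimed fixed-parameter tractability. I expect the crux to be the third kind of conflict above: a priori the many edges incident to $I$ may exhibit an intricate conflict pattern, and the point that makes the whole approach go through is that, once $A\cap X$ is fixed, each such conflict becomes a constraint on at most two of the side-variables $y_v$, so the residual problem is genuinely $2$-SAT rather than something harder.
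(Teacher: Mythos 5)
Your proof is correct and takes essentially the same approach as the paper: guess the bipartition of a minimum vertex cover, observe that the crossing edges inside the cover are forced, and reduce the placement of the remaining independent vertices to a polynomial-time 2-SAT check, for $\calO^*(2^k)$ overall. The differences are cosmetic — the paper realizes your unit clauses as preprocessing decision/reduction rules before building the 2-CNF, while you additionally handle the degenerate guesses $A_X \in \{\emptyset, X\}$ explicitly via the single-vertex test, a boundary case the paper passes over silently.
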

\begin{proof}
Let $G$ be the input graph, let $\hat{G}$ be the conflict graph, and let $X$ be a vertex cover of $G$ of minimum cardinality $k$.
Such a vertex cover can clearly be computed in $2^k\cdot n^{\calO(1)}$-time by the standard bounded search tree algorithm.
In the following, we describe a fixed-parameter algorithm for \cfc{}.
We guess a disjoint partition $(A, B)$ of $X$.
Then we exhaustively apply the following rules:
\begin{decisionrule} \label{drule_vc1}
If $E(A, B)$ contains conflicts, the partition $(A, B)$ cannot be extended to induce a conflict-free cut of $G$.
\end{decisionrule}
\begin{reductionrule} \label{rrule_vc1}
Let $I = V(G) \setminus X$. If there is a vertex $v \in I$ such that there are two incident conflicting edges $a_1v, a_2v \in E(A, I)$ (or $b_1v, b_2v \in E(B, I)$), then add $v$ to $A$ (or add $v$ to $B$).
\end{reductionrule}
\begin{reductionrule}\label{rrule_vc2}
If there is an edge $ab \in E(A, B)$ that has a conflict with an edge $a'v \in E(A, I)$ (or $b'v \in E(B, I)$), where $a, a' \in A$, $b, b' \in B$ and $v \in I$, then add $v$ to $A$ (or add $v$ to $B$).
\end{reductionrule}
\noindent 
It is easy to see that Decision Rule~\ref{drule_vc1} and Reductions Rule~\ref{rrule_vc1} and \ref{rrule_vc2} are correct.

We assume from here that the instance is reduced and all rules are not applicable anymore.
At this point, we construct a 2-SAT formula $\calF$ to decide if $(A, B)$ can be extended to a partition of $V(G)$ that induces a conflict-free cut of $G$.
Let $I = V(G) \setminus X$.
Note that $I$ is independent in $G$.
For each vertex $v \in I$, we introduce a variable $x_v$.
For each edge $\{ uv, rs \} \in E(\hat{G})$ with $u, r \in A \cup B$, $v, s \in I$ and $v \neq s$, we add the following clause to $\calF$:
\begin{itemize}
    \item If $u, r \in A$, we add $\{ \bar{x}_v, \bar{x}_s \}$ to $\calF$.
    \item If $u \in A$ and $r \in B$, we add $\{ \bar{x}_s, x_v \}$ to $\calF$.
    \item If $u \in B$ and $r \in A$, we add $\{ x_s, \bar{x}_v \}$ to $\calF$.
    \item If $u, r \in B$, we add $\{ x_v, x_s \}$ to $\calF$.
\end{itemize}
Now, if $(A', B')$ is a partition of $V(G)$ with $A \subseteq A'$ and $B \subseteq B'$ that induces a conflict-free cut, then for $v \in I$ we set $\alpha(x_v) = \texttt{true}$ if and only if $v \in B'$ for a satisfying assignment of $\calF$;
a partition of $V(G)$ can be constructed from a satisfying assignment in the opposite way.

Regarding the running time, there are $2^{k-1}$ partitions of $X$.
Decision Rule~\ref{drule_vc1} and Reduction Rules~\ref{rrule_vc1} and \ref{rrule_vc2} can be implemented to run in polynomial time.
The same holds for the construction of $\calF$.
Since $\calF$ is a 2-SAT formula, its satisfiability can be decided in polynomial time~\cite{aspvall1979linear}, too.
Thus, the overall running time of the described algorithm is $2^k \cdot n^{\calO(1)}$, that is, the problem is fixed-parameter tractable with the vertex cover number of $G$ as a parameter.
\end{proof}

Given the fixed-parameterized tractability regarding the vertex cover number of $G$, provided in Theorem~\ref{vertexcover}, one can ask about the parameterized complexity of the problem considering stronger parameters such as the feedback vertex set number of $G$.
Next, we show that determining whether an instance $(G,\hat{G})$ admits a conflict-free cut is \W[1]-hard even when $G$ is a series-parallel graph having a feedback vertex set of size one, and the clique cover number of $\hat{G}$ is the parameter.



\begin{proposition} \label{prop_w1}
Given a graph $G$ and a conflict graph $\hat{G}$ on $E(G)$, it is \textup{\W[1]}-hard to decide whether $(G,\hat{G})$ has a conflict-free cut, even when $G$ is a series-parallel graph having a feedback vertex set of size one, and the clique cover number of $\hat{G}$ is a parameter.
\end{proposition}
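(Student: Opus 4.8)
The plan is to give a parameterized reduction from \textsc{Multicolored Independent Set}, which is well known to be \W[1]-hard when parameterized by the number $k$ of color classes. Let $H$ be an input graph whose vertex set is partitioned into color classes $C_1,\dots,C_k$, say $C_i=\{v_{i,1},\dots,v_{i,n_i}\}$; by duplicating a vertex inside a class if necessary (a vertex and its duplicate are non-adjacent and have the same neighborhood, so this does not change the answer), we may assume $n_i\ge 2$ for all~$i$. I would build the input graph $G$ by taking two vertices $s$ and $t$ and, for each $i\in[k]$, an $s$-$t$ path $P_i$ with exactly $n_i$ edges $e_{i,1},\dots,e_{i,n_i}$ (hence of length $n_i\ge 2$), the paths being internally vertex-disjoint; the edge $e_{i,j}$ will represent the vertex $v_{i,j}$. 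Since every path has length at least two, $G$ is simple; it is series-parallel, being a parallel composition of $k$ series compositions of edges; and $G-s$ is a tree, so $\{s\}$ is a feedback vertex set of size one. The conflict graph $\hat G$ on $E(G)$ is defined by making $e$ and $f$ conflict exactly when (i) $e$ and $f$ lie on a common path $P_i$, or (ii) $e=e_{i,j}$ and $f=e_{i',j'}$ with $i\ne i'$ and $v_{i,j}v_{i',j'}\in E(H)$. By (i) each $E(P_i)$ is a clique of $\hat G$, and these $k$ cliques partition $V(\hat G)=E(G)$, so the clique cover number of $\hat G$ is at most $k$. The whole construction is computable in polynomial time.

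The core of the correctness argument is the following structural claim: a set $F\subseteq E(G)$ is a conflict-free cut of $G$ if and only if $F$ contains exactly one edge of each path $P_i$ and the set $S_F=\{v_{i,j}:e_{i,j}\in F\}$ of the corresponding vertices is an independent set of $H$. For the forward direction, conflict-freeness together with (i) forces $|F\cap E(P_i)|\le 1$ for every $i$; and if $F$ missed some path $P_i$ entirely, then, since at most one edge of each of the remaining paths is deleted, every vertex of $G-F$ still reaches $s$ or $t$ along its path while $s$ and $t$ are joined by the intact $P_i$, so $G-F$ would be connected and $F$ would not be a cut. Hence $F$ picks exactly one edge, say $e_{i,\sigma(i)}$, from each $P_i$. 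Conversely, deleting one edge from each path does disconnect $G$: the component of $s$ consists of $s$ together with the initial segment of each $P_i$ up to its deleted edge and does not contain $t$. Finally, such an $F$ has no two edges on a common path, so by (ii) it is conflict-free precisely when no two vertices of $S_F$ are adjacent in $H$; and $S_F$ meets every color class in exactly one vertex, so $S_F$ is then a multicolored independent set, and every multicolored independent set arises this way.

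Combining the two claims, $(G,\hat G)$ has a conflict-free cut if and only if $H$ has a multicolored independent set. Since the parameter of the produced instance, namely the clique cover number of $\hat G$, is at most $k$, this is a parameterized reduction, which yields \W[1]-hardness of \cfc{} parameterized by the clique cover number of $\hat G$, with $G$ series-parallel and admitting a feedback vertex set of size one, as claimed. Moreover, by the first claim every conflict-free cut of $(G,\hat G)$ has exactly $k$ edges, so (using that the clique cover number of $\hat G$ dominates its independence number and hence the size of any conflict-free cut, as already remarked) the same construction also witnesses \W[1]-hardness when parameterized by the size of the cut.

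I expect the main obstacle to be the structural claim, specifically ruling out ``degenerate'' cuts — cuts that are not minimal, or that separate only an internal piece of a single path, or that fail to be $s$-$t$ cuts — which is exactly where the within-path cliques of $\hat G$ and the bundle-of-paths structure of $G$ are used. By contrast, checking that $G$ is simple, series-parallel and has a one-vertex feedback vertex set, and that the clique cover number of $\hat G$ is at most $k$, is routine.
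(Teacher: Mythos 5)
Your proposal is correct and follows essentially the same route as the paper: a reduction from \textsc{Multicolored Independent Set} with one internally disjoint $s$-$t$ path per color class, one edge per vertex, and conflicts encoding adjacency in $H$. The only cosmetic difference is that the paper assumes each color class induces a clique in $H$ and takes $\hat{G}\cong H$, whereas you add the within-path conflicts explicitly (and pad classes to size at least two to keep $G$ simple); these yield the same conflict structure, and your spelled-out correctness argument matches what the paper asserts.
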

\begin{proof}
The proof is by a reduction from \textsc{Multicolored Independent Set}.
In this problem we are given a graph $H$ and a partition $(V_1, \dots, V_k)$ of $V(H)$.
We say a set of vertices $I$ is \emph{colorful} if $|V_i \cap I| \leq 1$ for all $i \in [k]$.
The question is whether $H$ admits a colorful independent set $I$ of cardinality at least $k$. 
We may assume that each set $V_i$ induces a clique in $H$.
From this we construct a graph $G$ with edge conflicts as follows.
\begin{enumerate}
    \item We create two new vertices $s$ and $t$, and add them to $G$.
    \item For $i \in [k]$, we create a new $s$-$t$ path $P_i$ of length $|V_i|$ in $G$. We associate to each edge of $P_i$ exactly one vertex of $V_i$.
    \item We choose the conflict graph isomorphic to $H$, using the correspondence of step 2.
\end{enumerate}

Note that $G$ is a series-parallel graph having a feedback vertex set of size one, and every conflict-free cut of $G$ is an $s$-$t$ cut and contains exactly one edge from each path $P_i$. 
In particular, every conflict-free cut of $G$ (if any) has cardinality exactly $k$, which is an upper bound for the clique cover number of $\hat{G}$. 
Moreover, $H$ admits a colorful independent set of cardinality at least $k$ if and only if $G$ admits a conflict-free cut of cardinality at most (or at least) $k$.
\end{proof}

Since the constructed graph $G$ in the reduction of our proof of Proposition~\ref{prop_w1} has a feedback vertex set of cardinality one, we get the following. 

\begin{corollary}
\cfc{} is para-\NP-hard considering the cardinality of a minimum feedback vertex set of the input graph $G$ as a parameter.
\end{corollary}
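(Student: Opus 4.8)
The plan is to observe that the corollary is essentially already contained in the proof of Proposition~\ref{prop_w1}; what remains to be checked is only that the reduction given there is not merely a parameterized reduction but an honest polynomial-time many-one reduction from an \NP-hard problem, and that it produces instances whose input graph has feedback vertex set number bounded by a constant. Recall that a parameterized problem is para-\NP-hard if it is \NP-hard already when the parameter is fixed to some constant; hence it suffices to exhibit an \NP-hard class of instances of \cfc{} in which the input graph $G$ has a feedback vertex set of size at most one.

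First I would record that \textsc{Multicolored Independent Set} (equivalently, the problem of picking one vertex from each part of a clique-partitioned graph so that the chosen set is independent) is \NP-hard as an ordinary decision problem. This follows by a direct reduction from \textsc{Independent Set}: given a graph $G'$ and an integer $\ell$, take $\ell$ disjoint copies of $V(G')$ as color classes $V_1, \dots, V_\ell$, make each $V_i$ a clique, and, for $i \neq j$, join the copy of $u$ in $V_i$ to the copy of $w$ in $V_j$ whenever $u = w$ or $uw \in E(G')$. A colorful independent set of size $\ell$ then corresponds exactly to an independent set of size $\ell$ in $G'$.

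Next I would revisit the construction in the proof of Proposition~\ref{prop_w1}. From an instance $(H, (V_1, \dots, V_k))$ of \textsc{Multicolored Independent Set} it builds, in polynomial (indeed linear) time in $|V(H)|$, a graph $G$ consisting of two vertices $s, t$ and $k$ internally disjoint $s$-$t$ paths whose edges are in bijection with $V(H)$, together with the conflict graph $\hat{G} \cong H$; in particular the map does not blow up superpolynomially with $k$, so it is a polynomial-time many-one reduction. Deleting the single vertex $s$ from $G$ leaves a disjoint union of paths, so $G$ has a feedback vertex set of size at most one (exactly one whenever $k \geq 2$). Since, as shown in the proof of Proposition~\ref{prop_w1}, $(G, \hat{G})$ has a conflict-free cut if and only if $H$ has a colorful independent set of size $k$, the reduction maps yes-instances to yes-instances and no-instances to no-instances.

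Combining these observations, \cfc{} restricted to instances whose input graph has feedback vertex set number at most one is \NP-hard, which is precisely the assertion that \cfc{} is para-\NP-hard parameterized by the feedback vertex set number of $G$. The only step requiring any care is the first one — confirming that the source problem is \NP-hard in the classical sense, not merely \W[1]-hard — but this is immediate from the reduction from \textsc{Independent Set} sketched above; everything else is a direct rereading of the proof of Proposition~\ref{prop_w1}.
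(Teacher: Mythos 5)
Your proposal is correct and takes essentially the same route as the paper: the corollary is read off from the reduction in Proposition~\ref{prop_w1}, whose output graph has a feedback vertex set of size one, combined with the classical \NP-hardness of \textsc{Multicolored Independent Set} (which you rightly flag as the one point needing explicit justification). One tiny slip: $G-s$ is not a disjoint union of paths but a tree, since the $k$ paths remain joined at $t$; it is acyclic either way, so the feedback vertex set bound and the conclusion stand.
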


Besides, as the clique cover number of $\hat{G}$ upper bounds the independence number of $\hat{G}$, it also bounds the size of any conflict-free cut (if any) of $(G,\hat{G})$.
This directly implies the following.
\begin{corollary}
\cfc{} is \textup{\W[1]}-hard taking the cut size as a parameter, while the \XP-solvability is trivial.
\end{corollary}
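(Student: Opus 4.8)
The plan is to reuse, essentially verbatim, the reduction built in the proof of Proposition~\ref{prop_w1}. Recall that it takes an instance $(H,(V_1,\dots,V_k))$ of \textsc{Multicolored Independent Set} — a problem that is \W[1]-hard parameterized by $k$ — and produces in polynomial time a pair $(G,\hat{G})$ in which every conflict-free cut is an $s$-$t$ cut using exactly one edge of each of the $k$ parallel paths $P_1,\dots,P_k$. Consequently, every conflict-free cut of $G$ (if one exists) has cardinality exactly $k$, and $H$ has a colorful independent set of size at least $k$ if and only if $G$ has a conflict-free cut at all. This forced cut size is precisely the leverage I would use.

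First I would fix the parameterized problem under consideration: given $(G,\hat{G})$ together with an integer $c$, with $c$ as the parameter, decide whether $G$ has a conflict-free cut of size at most $c$. (Since on the instances produced above every conflict-free cut has size exactly $k$, the ``at most $c$'', ``exactly $c$'', and ``at least $c$'' variants agree on these instances, so the choice is immaterial here.) Next I would compose the Proposition~\ref{prop_w1} reduction with the assignment $c := k$; this composition is still computable in polynomial time, the new parameter equals the old parameter $k$ — which is exactly why the observation that the size of any conflict-free cut is at most the clique cover number of $\hat{G}$ matters — and $(H,(V_i)_{i})$ is a yes-instance precisely when $((G,\hat{G}),c)$ is. As \textsc{Multicolored Independent Set} parameterized by $k$ is \W[1]-hard, this parameterized reduction establishes the \W[1]-hardness claim.

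For the \XP{} statement I would simply brute-force: on input $(G,\hat{G})$ and $c$, enumerate all edge subsets $F\subseteq E(G)$ with $|F|\le c$ — there are $\calO(|E(G)|^{c})$ of them — and, for each, test in polynomial time whether $G-F$ is disconnected and whether $F$ is independent in $\hat{G}$, accepting if and only if some $F$ passes both tests. This runs in time $|E(G)|^{\calO(c)}$, which is \XP.

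I do not anticipate a genuine obstacle: the whole content sits in Proposition~\ref{prop_w1}, and the only care needed is to pin down what ``parameterized by the cut size'' should mean and to note that the already-recorded inequality between the cut size and the clique cover number keeps the target parameter bounded — which here is immediate, since in this reduction the cut size is forced to be exactly $k$.
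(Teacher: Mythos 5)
Your proposal is correct and matches the paper's argument: the paper obtains this corollary directly from the reduction in Proposition~\ref{prop_w1}, observing that every conflict-free cut of the constructed instance has size exactly $k$ (bounded by the clique cover number of $\hat{G}$), so the same reduction is a parameterized reduction with the cut size as parameter, and XP-solvability follows by the brute-force enumeration you describe.
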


\section{Simple Polynomial-Time Solvable Cases and Open Problems} \label{sec_poly_open}
In this section we consider an input graph $G$ with $n$ vertices and $m$ edges together with a 1-regular conflict graph $\hat{G}$ on $E(G)$.
A first polynomial-time solvable case is derived from a simple counting argument.
\begin{proposition} \label{prop_counting}
If $m < 2n$, there exists a vertex $v$ of $G$ such that $\partial_Gv$ is conflict-free.
\end{proposition}
\begin{proof}
If for every vertex $v$ of $G$ it holds that $\partial_Gv$ has a conflict, since we are dealing with a simple graph, there are at least $2n$ edges in $G$.
\end{proof}

The bound of Proposition~\ref{prop_counting} is actually sharp:
Recall that the uncutable gadget is the square of an even cycle on at least eight vertices with two additional edges (see Figure~\ref{fig_gadgets}).
Thus, for every $\epsilon > 0$, there exists a graph $G$ with $n$ vertices and $m$ edges, and a 1-regular conflict graph on $E(G)$, such that $G$ does not possess a conflict-free cut and $m \leq (2+\epsilon)n$.
If we take the square of an odd cycle instead, with a similar edge-conflict structure as in the uncutable gadget and without additional edges, then this graph has $n$ vertices and $2n$ edges, and does not possess a conflict-free cut.
These observations, together with Proposition~\ref{prop_counting}, give rise to the following open problem.
\begin{problemenv}
Is the square of an odd cycle with edge-conflicts as in the uncutable gadget the only 4-regular graph without a conflict-free cut? Is it also the only one when we just impose $m = 2n$?
\end{problemenv}

For the next open problem, note that the two additional edges, which we added to the uncutable gadget during the construction in Section~\ref{sec_np} at last, make the graph non-planar. We were unable to construct a planar graph $G$ with a 1-regular conflict graph $\hat{G}$ that does not have a conflict-free cut. Thus, we have a further open problem.
\begin{problemenv}
If $G$ is planar, is there a conflict-free cut?
\end{problemenv}

By Proposition~\ref{prop_counting}, if the input graph $G$ is 2-degenerate, then there exists a $v$ of $G$ such that $\partial_Gv$ is conflict-free, since $m \leq 2n-3$ holds in this case.
It is an easy exercise to prove this directly using a linear ordering.
Therefore, we may ask this question:
\begin{problemenv}
What is the complexity of \cfc{} on 3-degenerate graphs $G$ and 1-regular conflict graphs $\hat{G}$?
\end{problemenv}


\section{An \NP-Complete Symmetric Variant of SAT} \label{sec_sat}
With Theorem~\ref{thm_np} we can show that a symmetric variant of \textsc{SAT} is \NP{}-complete.
We say that a family of clauses $\calF$ is \emph{symmetric}, if for every clause $C = \{\ell_1, \dots, \ell_k\} \in \calF$ with literals $\ell_1, \dots, \ell_k$ the \emph{symmetric clause} $\bar{C} = \{\bar{\ell}_k, \dots, \bar{\ell}_k\}$ of $C$ is in $\calF$, too.  
We will consider the following SAT variant:

\begin{framed}
\noindent
\symnontrivialsat{} \\
Instance: A symmetric family of clauses $\calF$, where every clause $C \in \calF$ has
\begin{enumerate}[(i)]
    \item size three and contains at least one positive and one negative variable, or
    \item size four and contains two positive and two negative variables. \label{prob_clause2}
\end{enumerate} 
Question: Does $\calF$ have a non-constant satisfying assignment? That is, does $\calF$ have a satisfying assignment that does not assign the same value to every variable?
\end{framed}

\begin{corollary} \label{cor_np_sym_sat}
\symnontrivialsat{} is \NP{}-complete, even when the following is true. For every two distinct variables $x$ and $y$ and every two distinct clauses $C$ and $C'$, the memberships $x, \bar{y} \in C$ and $\bar{x}, y \in C'$ imply that $C'$ is the symmetric clause of $C$.
\end{corollary}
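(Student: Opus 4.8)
The plan is to reduce \cfc{} with a $1$-regular conflict graph --- which is \NP-complete by Theorem~\ref{thm_np} --- to \symnontrivialsat{}.

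The first step is a strengthening of Theorem~\ref{thm_np}: \cfc{} stays \NP-complete when, in addition, $\hat G$ is $1$-regular and every two conflicting edges of $G$ share an endpoint; I will call such conflicts \emph{incident}. To obtain this one re-examines the reduction proving Theorem~\ref{thm_np}. Its conflicts are incident except in a few places --- the conflict linking the shadow copies of the first and last edge of a length-three path $P_C$, the conflicts between the $st$-edges and the gadget edges $u_{i,2}a_i,u_{i,2}b_i$, the conflict between the two extra edges of an uncutable gadget, and the pairs $\{u_{i,1}v_{i,1},a_iv_{i,2}\}$, $\{u_{i,3}v_{i,3},b_iv_{i,2}\}$. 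Since the present corollary needs neither planarity nor a degree bound, one can modify the corresponding gadgets (by subdividing and rerouting the relevant edges, and by placing the two extra edges of the uncutable gadget through a common vertex) so that the same separation logic is enforced with incident conflicts only. I expect this gadget redesign --- making the construction of Theorem~\ref{thm_np} go through with incident conflicts --- to be the main obstacle of the proof; the rest is routine.

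Given an instance $(G,\hat G)$ with $\hat G$ $1$-regular and all conflicts incident, I build an instance $\calF$ of \symnontrivialsat{} as follows. For every vertex $v\in V(G)$ introduce a variable $y_v$ (intended meaning: the side of $v$ in a bipartition). Every conflict pair has the form $\{uv,uw\}$ with a common endpoint $u$ and $v\neq w$; for each such pair add to $\calF$ the two clauses $\bar y_u\vee y_v\vee y_w$ and $y_u\vee\bar y_v\vee\bar y_w$. These are mutually symmetric, have size three, contain a positive and a negative literal, and use three distinct variables, so $\calF$ is a legal instance. Since $G$ is connected, a non-constant assignment of $\calF$ corresponds exactly to a partition of $V(G)$ into two nonempty classes $S, V(G)\setminus S$, whose edge boundary is a cut of $G$, and the two clauses of a conflict pair $\{uv,uw\}$ are satisfied precisely when not both $uv,uw$ lie in that boundary. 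Hence $\calF$ has a non-constant satisfying assignment if and only if $(G,\hat G)$ has a conflict-free cut; this proves \NP-hardness, and membership in \NP{} is immediate.

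Finally, I verify the additional property. As $\hat G$ is $1$-regular, every edge of $G$ lies in exactly one conflict pair, and every conflict pair contributes exactly two clauses to $\calF$, which are distinct and mutually symmetric. Let $C\neq C'$ be clauses of $\calF$ and let $x=y_a$, $y=y_b$ with $a\neq b$, $x,\bar y\in C$, and $\bar x,y\in C'$. Inspecting the two clause shapes, $C$ can contain $y_a$ positively and $\bar y_b$ negatively only if $ab\in E(G)$ and $ab$ belongs to the conflict pair that produced $C$; likewise $C'$ can contain $\bar y_a$ negatively and $y_b$ positively only if $ab$ belongs to the conflict pair that produced $C'$. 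Since $ab$ lies in a unique conflict pair, $C$ and $C'$ are both among the two clauses produced by that pair; as $C\neq C'$, it follows that $C'=\bar C$. This establishes the additional property and completes the proof.
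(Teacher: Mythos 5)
Your reduction from incident-conflict instances to \symnontrivialsat{} is sound, and your verification of the extra condition (using that $\hat G$ is $1$-regular, so the edge $ab$ determined by the crossed literals pins down a unique conflict pair and hence the pair of mutually symmetric $3$-clauses) is correct. The genuine gap is your first step: you need \cfc{} to remain \NP-complete when, in addition to $\hat G$ being $1$-regular, \emph{every} conflicting pair of edges shares an endpoint, and this is neither proven in the paper nor a routine consequence of the reduction behind Theorem~\ref{thm_np}. In that construction, several conflicts are intrinsically non-incident: most importantly the variable-consistency conflicts between $u_{i,1}v_{i,1}$ and $a_iv_{i,2}$ (and $u_{i,3}v_{i,3}$ and $b_iv_{i,2}$), which join edges lying on \emph{different} clause paths $P_C$, as well as the conflicts between the $st$-edges and the edges $u_{i,2}a_i$, $u_{i,2}b_i$. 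Any gadget that makes such a pair incident must merge or connect vertices across distinct clause paths, which creates new edges, new potential cuts not corresponding to assignments, and new conflict partners needed to keep $\hat G$ $1$-regular; the proof of Claim~1 (every conflict-free cut is an $s$-$t$ cut) and the two directions of the equivalence would have to be redone for the modified graph. ``Subdividing and rerouting'' is not a proof here, and you yourself flag this as the main obstacle --- so the argument as written does not establish the corollary.

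The paper takes a different and simpler route that avoids this obstacle entirely: it reduces directly from \cfc{} with a $1$-regular conflict graph as given by Theorem~\ref{thm_np}, encoding an \emph{incident} conflict $\{uv,uw\}$ by the symmetric pair of $3$-clauses (essentially your clauses), and a \emph{non-incident} conflict $\{uv,rs\}$ by the four size-four clauses $\{x_u,\bar x_v,x_r,\bar x_s\}$, $\{x_u,\bar x_v,\bar x_r,x_s\}$, $\{\bar x_u,x_v,x_r,\bar x_s\}$, $\{\bar x_u,x_v,\bar x_r,x_s\}$, which together forbid exactly the assignments cutting both $uv$ and $rs$. This is precisely why clause type (\ref{prob_clause2}) (size four, two positive and two negative variables) appears in the definition of \symnontrivialsat{}. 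If you want to salvage your approach, either prove the incident-conflict strengthening of Theorem~\ref{thm_np} as a separate lemma, or handle non-incident conflicts with size-four clauses as the paper does.
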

\begin{proof}
It is obvious that \symnontrivialsat{} is in \NP{}. 
We reduce \cfc{}, where the conflict graph is 1-regular, to \symnontrivialsat{}. 
This problem is \NP{}-complete by Theorem~\ref{thm_np}. 
Let $G$ be a connected graph and let $\hat{G}$ be a 1-regular conflict graph, for which we intend to find a conflict-free cut.
We construct a family of symmetric clauses $\calF$ over the variables $\{x_v \,|\, v \in V(G)\}$.
For every pair of conflicting edges $uv$ and $rs$ of $G$ we add the following clauses to $\calF$:
\begin{itemize}
\item If $uv$ and $rs$ are incident, let $u = r$ and add $\{x_u, \bar{x}_v, \bar{x}_s\}$ as well as $\{\bar{x}_u, x_v, x_s\}$ to $\calF$. 
\item Otherwise we add $\{x_u, \bar{x}_v, x_r, \bar{x}_s\}$, $\{x_u, \bar{x}_v, \bar{x}_r, x_s\}$, $\{\bar{x}_u, x_v, x_r, \bar{x}_s\}$ and $\{\bar{x}_u, x_v, \bar{x}_r, x_s\}$ to $\calF$.
\end{itemize}
The family of clauses $\calF$ is symmetric, has $n(G)$ variables and at most $2m(G)$ clauses. The construction can clearly be done in polynomial time. It is easy to verify that $G$ has a conflict-free cut if and only if $\calF$ has a non-constant satisfying assignment.
\end{proof}

%

\begin{proposition}\label{prop_symsat}
Deciding an instance $\calF$ of \symnontrivialsat{} can be reduced to deciding $n-1$ instances of \textsc{4-SAT}, where $n$ is the number of variables occurring in $\calF$.
\end{proposition}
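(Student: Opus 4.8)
The plan is to branch on a single witness that the sought satisfying assignment is non-constant. Fix an arbitrary ordering $x_1, \dots, x_n$ of the variables occurring in $\calF$; we may assume $n \ge 2$, since for $n \le 1$ there is no non-constant assignment and the reduction just reports ``no'' (producing $n-1 = 0$ instances). The elementary fact driving the argument is that a truth assignment $\alpha$ on $x_1, \dots, x_n$ is non-constant if and only if $\alpha(x_i) \neq \alpha(x_{i+1})$ for some $i \in \{1, \dots, n-1\}$: the ``if'' direction is trivial, and the ``only if'' direction is the contrapositive, since $\alpha(x_i) = \alpha(x_{i+1})$ for all $i$ forces $\alpha$ to be constant by transitivity.

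For each $i \in \{1, \dots, n-1\}$, I would form the CNF formula
\[
\calF_i \;=\; \calF \;\cup\; \bigl\{\, \{x_i,\, x_{i+1}\},\ \{\bar{x}_i,\, \bar{x}_{i+1}\} \,\bigr\},
\]
whose two new clauses jointly express $x_i \neq x_{i+1}$. Every clause of $\calF$ has three or four literals and the two added clauses have two literals, so every clause of $\calF_i$ has at most four literals; hence $\calF_i$ is a \textsc{4-SAT} instance (if one insists that clauses have exactly four literals, pad each shorter clause with a constant number of fresh variables in the standard way, which changes neither the number of instances nor their polynomial-time computability). Each $\calF_i$ is obtained from $\calF$ in polynomial time, and there are exactly $n-1$ of them.

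It then remains to verify the equivalence: $\calF$ has a non-constant satisfying assignment if and only if $\calF_i$ is satisfiable for at least one $i \in \{1, \dots, n-1\}$. If some $\calF_i$ is satisfied by $\alpha$, then $\alpha$ satisfies $\calF$ and $\alpha(x_i) \neq \alpha(x_{i+1})$, so $\alpha$ is non-constant by the observation above. Conversely, if $\alpha$ is a non-constant satisfying assignment of $\calF$, pick $i$ with $\alpha(x_i) \neq \alpha(x_{i+1})$; then $\alpha$ satisfies both added clauses and hence satisfies $\calF_i$. The reduction thus decides $\calF_1, \dots, \calF_{n-1}$ and answers ``yes'' exactly when at least one of them is a yes-instance.

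There is essentially no obstacle here; the only points requiring (minor) care are the clause-width bookkeeping that keeps each $\calF_i$ within \textsc{4-SAT} and a clean statement of the non-constant/adjacent-difference characterization. I note in passing that one could instead exploit the symmetry of $\calF$ -- a satisfying assignment and its bitwise complement are both satisfying -- to fix $x_1$ to \texttt{true} and branch over which of $x_2, \dots, x_n$ is set to \texttt{false}, forming $\calF \cup \{\{x_1\}, \{\bar{x}_j\}\}$ for $j = 2, \dots, n$; this again yields $n-1$ instances of \textsc{4-SAT}, but the adjacent-pair version above is self-contained and does not even rely on symmetry.
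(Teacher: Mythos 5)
Your proof is correct, but it takes a genuinely different route from the paper. The paper exploits the symmetry of $\calF$: since satisfying assignments are closed under complementation, it fixes one variable $x$ to \texttt{false} without loss of generality and branches over which other variable $y$ is set to \texttt{true}, producing $n-1$ simplified instances on $n-2$ variables each (this is essentially the variant you mention in passing at the end). You instead add, for each adjacent pair, the two binary clauses $\{x_i,x_{i+1}\}$ and $\{\bar{x}_i,\bar{x}_{i+1}\}$ enforcing $x_i\neq x_{i+1}$, and use the elementary observation that an assignment is non-constant iff some adjacent pair differs. Your argument never uses symmetry, so it is more general (it handles the ``non-constant SAT'' question for arbitrary CNFs of width at most four), while the paper's version buys slightly smaller instances (two fewer variables, no new clauses) and keeps each instance a subformula of $\calF$; both versions support the subsequent Corollary~\ref{cor_fast} equally well, and both preserve the concluding remark about 3-SAT, since your added clauses have width two. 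One minor caveat: your parenthetical suggestion to pad short clauses to exactly four literals by adding fresh variables is not quite ``standard'' as stated -- simply appending a fresh literal weakens the clause; one must either duplicate the clause over the fresh variable and its negation or force the fresh variable to \texttt{false} -- but this is irrelevant to the proposition, since \textsc{4-SAT} admits clauses with at most four literals.
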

\begin{proof}
Fix a variable $x$ of $\calF$. 
For every other variable $y$ of $\calF$, $y \neq x$, let $\calF_y$ denote the formula obtained from $\calF$ by assigning to $x$ the value \texttt{false} and to $y$ the value \texttt{true}, and simplifying it afterwards. Now $\calF$ is a yes-instance if and only if at least one of the formulas $\calF_y$ is satisfiable.
\end{proof}

Schöning's probabilistic algorithm for $k$-SAT~\cite{schoning1999probabilistic} has a running time of $\calO^*\left((2-\frac{2}{k})^n\right)$, where $n$ is the number of variables in the input $k$-SAT formula.
When the input formula is not satisfiable, the algorithm always gives the correct answer.
When the input formula is satisfiable, the probability that it returns a wrong answer is bounded by a constant.
Thus, its one-sided error probability can be made exponentially small by executing it a constant number of times.
Moreover, it has been fully derandomized by Moser and Scheder~\cite{moser2011full}, who presented a deterministic algorithm for $k$-SAT with a running time of $\calO^*\left((2-\frac{2}{k})^{n + o(n)}\right)$.
Together with these results, Proposition~\ref{prop_symsat} implies the following.

\begin{corollary}\label{cor_fast}
\symnontrivialsat{}, and thus \cfc{}, can be solved
\begin{enumerate}[(i)]
    \item in $\calO^*(1.5^n)$-time by a probabilistic algorithm with arbitrarily small one-sided error probability, or
    \item in $\calO^*(1.5^{n+o(n)})$-time by a deterministic algorithm,
\end{enumerate}
where $n$ is the number of variables, or the order of the input graph, respectively.
\end{corollary}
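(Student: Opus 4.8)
The plan is to plug Proposition~\ref{prop_symsat} into the cited fast algorithms for bounded-width \textsc{SAT}, specialised to width $4$. First I would apply Proposition~\ref{prop_symsat}: an instance $\calF$ of \symnontrivialsat{} on $n$ variables is a yes-instance if and only if at least one of the $n-1$ formulas $\calF_y$ is satisfiable, where $\calF_y$ is obtained from $\calF$ by fixing a chosen variable to \texttt{false}, a second variable $y$ to \texttt{true}, and simplifying. Since simplification only deletes literals from clauses and the clauses of $\calF$ have size three or four, each $\calF_y$ is a $4$-CNF formula (if a clause becomes empty, $\calF_y$ is at once unsatisfiable), and it has at most $n$ variables because no new variables are created. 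Forming all $n-1$ formulas $\calF_y$ takes polynomial time.

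For part (i), I would run Sch\"oning's randomised $k$-SAT algorithm with $k=4$ on each $\calF_y$. By his analysis, satisfiability of a $4$-CNF formula on at most $n$ variables can be decided in $\calO^*\!\left((2-\tfrac{2}{4})^n\right) = \calO^*(1.5^n)$ time by a procedure that is always correct on unsatisfiable inputs and that, on satisfiable inputs, errs with probability at most any prescribed constant $\varepsilon$. I answer ``yes'' precisely when some of the $n-1$ runs reports a satisfying assignment. As $n-1$ is only a polynomial number of runs, the total running time stays $\calO^*(1.5^n)$, and the overall error is one-sided and at most $\varepsilon$: on a no-instance every $\calF_y$ is unsatisfiable and every run answers ``no'' with certainty, while on a yes-instance some $\calF_{y^{\ast}}$ is satisfiable and a wrong overall ``no'' would in particular require the run on that $\calF_{y^{\ast}}$ to fail, which happens with probability at most $\varepsilon$ --- so no union bound over the $n-1$ sub-instances is needed.

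For part (ii), I would instead use the deterministic $k$-SAT algorithm of Moser and Scheder, which runs in $\calO^*\!\left((2-\tfrac{2}{k})^{n+o(n)}\right)$ time; with $k=4$ this is $\calO^*(1.5^{n+o(n)})$ per formula, and applying it to all $n-1$ formulas $\calF_y$ and taking the disjunction of the outcomes gives a deterministic $\calO^*(1.5^{n+o(n)})$-time algorithm. Finally, for the statement about \cfc{}, I would precompose with the polynomial-time reduction in the proof of Corollary~\ref{cor_np_sym_sat}, which maps an instance $(G,\hat{G})$ with $|V(G)|=n$ to an equivalent instance of \symnontrivialsat{} on exactly $n$ variables; composing the two yields the stated bounds with $n$ the order of the input graph.

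I expect the only step that is not a mechanical substitution to be the error bookkeeping in part (i): one must observe that the disjunction over the $n-1$ sub-instances preserves one-sidedness, since a wrong overall ``no'' on a yes-instance forces a failure on one fixed satisfiable sub-instance, so a single constant amplification per run --- rather than a union bound that would cost a factor depending on $n$ --- already suffices. The rest is only the arithmetic $2-\tfrac{2}{4}=\tfrac{3}{2}$, the remark that simplification keeps clause width at most four and introduces no new variables, and the fact that the reduction behind Corollary~\ref{cor_np_sym_sat} preserves the number of variables.
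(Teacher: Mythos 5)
Your proposal is correct and follows essentially the same route as the paper, which obtains the corollary directly by combining Proposition~\ref{prop_symsat} with Sch\"oning's algorithm (for the probabilistic bound) and its derandomization by Moser and Scheder (for the deterministic bound), and with the variable-preserving reduction behind Corollary~\ref{cor_np_sym_sat} for the \cfc{} statement. Your additional bookkeeping --- that each $\calF_y$ stays a $4$-CNF on at most $n$ variables and that the one-sided error needs no union bound over the $n-1$ sub-instances --- is accurate and just makes explicit what the paper leaves implicit.
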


We conclude this section with a short remark.
Consider an instance of \cfc{} where all conflicting edges are adjacent.
In this case, the corresponding \symnontrivialsat{} instance is actually a 3-SAT formula.
In view of Proposition~\ref{prop_symsat}, deciding it can be reduced to deciding $n-1$ instances of 3-SAT.
Hence, for such an instance of \cfc{}, we can improve the basis of the exponential running time in Corollary~\ref{cor_fast} to $\frac{4}{3}$.

\end{document}